\documentclass[final]{siamltex}
\usepackage{epsfig}
\usepackage{graphicx}
\usepackage{amsmath}
\usepackage{amssymb}
\usepackage{setspace}
\usepackage{enumerate}
\usepackage{array}
\usepackage{color}
\newcolumntype{C}[1]{>{\centering\let\newline\\\arraybackslash\hspace{0pt}}m{#1}}
\usepackage[ruled,linesnumbered,vlined]{algorithm2e}

\newtheorem{defn}{\noindent $\mathbf{Definition}$}[section]

\newtheorem{prop}[defn]{$\mathbf{Proposition}$}
\newtheorem{thm}[defn]{$\mathbf{Theorem}$}


\title{Fast Spherical Quasiconformal Parameterization of Genus-0 Closed Surfaces with Application to Adaptive Remeshing}

\author{Gary Pui-Tung Choi\thanks{John A. Paulson School of Engineering and Applied Sciences, Harvard University (pchoi@g.harvard.edu).} 
\and Mandy Hiu-Ying Man\thanks{Department of Mathematics, The Chinese University of Hong Kong (hyman@link.cuhk.edu.hk).} 
\and Lok Ming Lui\thanks{Department of Mathematics, The Chinese University of Hong Kong (lmlui@math.cuhk.edu.hk).}
}

\begin{document}
\maketitle
\begin{abstract}
In this work, we are concerned with the spherical quasiconformal parameterization of genus-0 closed surfaces. Given a genus-0 closed triangulated surface and an arbitrary user-defined quasiconformal distortion, we propose a fast algorithm for computing a spherical parameterization of the surface that satisfies the prescribed distortion. The proposed algorithm can be effectively applied to adaptive surface remeshing for improving the visualization in computer graphics and animations. Experimental results are presented to illustrate the effectiveness of our algorithm. 

\end{abstract}

\begin{keywords}
 
Mesh parameterization, Surface remeshing, Quasiconformal map
\end{keywords}

\pagestyle{myheadings}
\thispagestyle{plain}
\markboth{Choi, Man and Lui}{Fast Spherical Quasiconformal Parameterization}

\section{Introduction}
In recent decades, conformal parameterization of genus-0 closed meshes has been widely studied by various research groups. Also, various quasiconformal parameterization algorithms have been developed for planar domains and simply-connected open meshes by different researchers. However, the study of quasiconformal parameterization on meshes with spherical topology is limited. Given a user-defined quasiconformal distortion, we aim to compute a spherical quasiconformal parameterization with the prescribed distortion. In this work, we first develop the concept of quasiconformal dilation on triangulated meshes as a measurement of quasiconformality. Then, we propose a fast algorithm for the computation of spherical parameterizations that satisfy arbitrary user-defined quasiconformal dilations. In particular, a uniform quasiconformal dilation results in a spherical parameterization with uniform conformality distortion. 

With our proposed spherical quasiconformal parameterization algorithm, adaptive surface remeshing can be easily achieved. In computer graphics and animations, the visual quality of surfaces is affected by the piecewise linear discretization of them. Different triangulations (or quadrangulations) of a surface can have significantly different visual effects. In general, regular triangles are preferred as they can provide a smoother approximation of the original surfaces. However, this may not be true in some special cases. For instance, sharp triangles may be more suitable for approximating a sharp and narrow feature on a surface. To adaptively produce different types of triangles on different parts of a surface, we can apply our proposed spherical quasiconformal parameterization algorithm with certain user-defined distortions. Then, using standard remeshing techniques such as the spherical Delaunay triangulation scheme, we can obtain a triangulation on the spherical parameterization. This induces a triangulation on the original surface. Since the parameterization is quasiconformal but not necessarily conformal, the induced triangulation may not be Delaunay. Instead, if the user-defined distortion is assigned in a special way, the induced triangulation will accomplish our goal.

The organization of the paper is as follows. In Section \ref{contribution}, we highlight the contribution of our work. In Section \ref{previous}, we review the literature related to our work. The mathematical background of our work is introduced in Section \ref{background}. In Section \ref{main}, we explain the details of our proposed algorithm with application to remeshing. Experimental results are presented in Section \ref{experiment}. In Section \ref{conclusion}, we conclude the paper and outline the future work.

\section{Contributions} \label{contribution}
Our proposed spherical quasiconformal parameterization algorithm has following advantages:
\begin{enumerate}
 \item \emph{Efficiency}: Our proposed algorithm only involves solving a few sparse linear systems and hence is highly efficient in practice. 
 \item \emph{Bijectivity}: The bijectivity of the resulting parameterization is supported by quasiconformal theory.
 \item \emph{Accuracy}: Our algorithm can accurately compute a spherical parameterization with the prescribed distortion.
 \item \emph{Applicability}: Our algorithm can be effectively applied for adaptively remeshing genus-0 closed surfaces.
\end{enumerate}

\section{Previous works}\label{previous}
\subsection{Conformal and quasiconformal parameterization} 
\begin{table}[t]
    \centering
    \begin{tabular}{ |C{30mm}|C{25mm}|C{20mm}|C{30mm}|C{23mm}| }
    \hline
    Method & Topology & Parameter domain & Distortion criterion & Iterative minimization required? \\ \hline
    Extremal Quasiconformal Maps \cite{Weber12}  & Disk-type & Plane & Uniform Conformality Distortion & Yes \\ \hline
    Bounded Distortion Mappings \cite{Lipman12}  & Disk-type & Plane & Quasiconformal & Yes \\ \hline
    Discrete Curvature Flow \cite{Zeng12} & Disk-type & Plane & Quasiconformal & Yes \\ \hline
    Injective and Bounded Distortion Mappings \cite{Aigerman13} & Disk-type / Genus-0 & Plane / Polycube & Quasiconformal & Yes \\ \hline
    QC Iteration \cite{Lui14} & Disk-type & Plane & Uniform Conformality Distortion & Yes \\ \hline
    TEMPO \cite{Meng15} & Disk-type & Plane & Uniform Conformality Distortion & Yes \\ \hline
    Our proposed FSQC algorithm & Genus-0 & Sphere & Quasiconformal / Uniform Conformality Distortion & No \\ \hline
    \end{tabular}
    \bigbreak
    \caption{Several works on quasiconformal parameterization of simply-connected surfaces.}
    \label{previouswork}
\end{table}

In the past two decades, surface conformal parameterization has been widely studied \cite{Floater02,Floater05,Sheffer06,Hormann07}. In particular, the recent approaches of conformal parameterizations include simplifying harmonic energy minimization \cite{Lai14,Choi15d}, generalizing Ricci flow to the discrete setting \cite{Jin08,Yang09,Zhang14}, and introducing quasiconformal composition \cite{Choi15a,Choi15b,Choi15c}.

In recent years, the study of surface quasiconformal parameterizations has been emerging. The works on quasiconformal parameterization of simply-connected surfaces are summarized in Table \ref{previouswork}. In \cite{Weber12}, Weber et al. introduced an algorithm for computing extremal quasiconformal mappings for simply-connected open meshes using holomorphic quadratic differentials. In \cite{Lipman12}, Lipman introduced bounded distortion mappings for triangular meshes with boundary. Zeng et al. \cite{Zeng12} proposed to compute quasiconformal parameterizations using a discrete auxiliary metric and the Yamabe flow. In \cite{Aigerman13}, Aigerman and Lipman developed an algorithm for computing bounded distortion mappings in 3D. The algorithm can be applied for parameterizing meshes onto the 2D plane or polycubes. In \cite{Lui14}, Lui et al. proposed an iterative algorithm for computing Teichm\"uller maps, which are with uniform conformality distortion, of simply-connected open meshes. The convergence of the algorithm has been proved in \cite{Lui15}. In \cite{Meng15}, Meng et al. proposed the TEMPO algorithm for computing landmark-matching Teichm\"uller parameterization of disk-type point cloud surfaces.

\subsection{Remeshing via parameterization} 
Surface remeshing has been widely studied for generating desired surface meshes in recent decades. In particular, surface remeshing is usually achieved with the aid of parameterization. For instance, Hormann et al. \cite{Hormann01} studied the remeshing for topologically disk-like surfaces with a boundary and no holes using parameterization over a planar domain. They applied the Most Isometric Parameterization Strategy (MIPS) \cite{Hormann00} for generating triangle meshes with subdivision connectivity. Gu et al. \cite{Gu02} proposed to remesh a surface onto a completely regular structure called geometry image, by cutting the mesh along a network of edge paths into a topological disk and computing a square parameterization. In \cite{Praun03}, Praun et al. introduced the idea of remeshing genus-0 closed surfaces by spherical parameterization instead of planar parameterization. This avoids cutting the surface and hence the parameterization becomes unconstrained. Hu et al. \cite{Hu11} proposed a low-distortion spherical parameterization for closed genus 0 meshes to generate subdivision connectivity meshes. The meshes are then smoothed by the umbrella operator. Remacle et al. \cite{Remacle10} developed a scheme based on one-to-one discrete harmonic maps for generating surface meshes. In \cite{Choi14}, Choi et al. proposed an algorithm to compute planar conformal parametrization of disk-type meshes and to obtain regular triangulations on the planar domain using landmark-matching Teichm\"uller maps.

\section{Mathematical background} \label{background}

In this section, we introduce the concept of conformal maps and quasiconformal maps. Readers are referred to \cite{Schoen94,Schoen97,Gardiner00,Jost11,Lui13} for more details.

\subsection{Conformal maps}
We begin with the definition of conformal maps between Riemann surfaces.
\begin{defn}[Conformal maps]
Let $\mathcal{M}$ and $\mathcal{N}$ be two Riemann surfaces. A map $f:\mathcal{M} \to \mathcal{N}$ is \emph{conformal} if there exists a scalar function $\lambda(x^1,x^2)>0$, called the {\it conformal factor}, such that
\begin{equation}
f^*ds_{\mathcal{N}}^2 = \lambda ds_{\mathcal{M}}^2.
\end{equation}
\end{defn}
An immediate consequence is that every conformal map preserves angles and hence the infinitesimal shapes of the surface.

Among all conformal maps, we are particularly interested in those which map an arbitrary genus-0 closed surface onto a simple standard domain. The existence of such conformal maps is guaranteed by the uniformization theorem.

\begin{thm}[Uniformization of Riemann surfaces]
\label{thm:uniformization}
Every simply connected Riemann surface $\mathcal{M}$ is conformally equivalent to exactly one of the following three domains:
\begin{enumerate}[(i)]
 \item the Riemann sphere,
 \item the complex plane,
 \item the open unit disk.
\end{enumerate}
\end{thm}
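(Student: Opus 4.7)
The plan is to prove the trichotomy by first separating Riemann surfaces into compact and non-compact cases, and then within the non-compact case, distinguishing hyperbolic from parabolic type via the existence of a Green's function. The three target domains are clearly mutually non-equivalent: the Riemann sphere is compact while the other two are not, and any holomorphic map from the complex plane to the unit disk is bounded and entire, hence constant by Liouville's theorem. So only the existence half of the statement requires work.

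For the compact case, I would argue that a simply connected compact Riemann surface $\mathcal{M}$ has Euler characteristic $2$, so genus $0$. The goal is to produce a meromorphic function with exactly one simple pole at a chosen point $p_0 \in \mathcal{M}$; such a function is automatically a degree-one map onto the Riemann sphere and hence a biholomorphism. This would be done either by invoking Riemann–Roch (which in the genus-$0$ case gives $\dim H^0(\mathcal{M}, \mathcal{O}(p_0)) = 2$) or, more constructively, by solving a Dirichlet-type problem on $\mathcal{M} \setminus \{p_0\}$ to build a harmonic function with the required singular behavior and then integrating its conjugate differential, using simple connectivity to kill periods.

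The non-compact case is where the real work lies. I would fix a point $p_0 \in \mathcal{M}$ and attempt to build a Green's function $g_{p_0}$ via the Perron method: take the supremum of all negative subharmonic functions on $\mathcal{M} \setminus \{p_0\}$ that behave like $-\log|z-p_0|$ near $p_0$ in a local chart. Either this supremum is a genuine harmonic function with the prescribed logarithmic singularity (the \emph{hyperbolic} case) or the Perron family is unbounded (the \emph{parabolic} case). In the hyperbolic case, simple connectivity lets me form a single-valued harmonic conjugate, assemble $f = e^{-(g_{p_0}+ih)}$, and check that $f: \mathcal{M} \to \mathbb{D}$ is holomorphic, nowhere zero away from $p_0$, proper, and unramified; a monodromy / covering-space argument then forces $f$ to be a conformal bijection onto $\mathbb{D}$. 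In the parabolic case, I would replace Green's function by an Evans–Selberg potential, a harmonic function with a logarithmic pole at $p_0$ tending to $+\infty$ at the ideal boundary, and exponentiate analogously to obtain a conformal isomorphism onto $\mathbb{C}$.

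The main obstacle is the construction and regularity of Green's function (or its parabolic substitute) via the Perron method: one must establish the subharmonic comparison principle on abstract Riemann surfaces, verify that the Perron envelope is harmonic off $p_0$ with the right singular profile, and prove the sharp dichotomy that failure of finiteness produces an Evans potential. A secondary obstacle is checking bijectivity of the holomorphic map produced: injectivity follows from a maximum-principle argument comparing $g_{p_0}$ with $g_q$ for $q \neq p_0$, while surjectivity uses that the map is an unramified covering of the target domain, which is simply connected. Once these analytic ingredients are in place, assembling the three cases into the statement of Theorem~\ref{thm:uniformization} is routine.
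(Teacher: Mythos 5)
The paper does not prove this theorem at all: it is quoted verbatim as classical background (the Uniformization Theorem), with the reader referred to standard references such as Schoen--Yau and Jost. So there is no in-paper argument to compare against; your proposal has to be judged as a free-standing plan. As such, it is the standard classical route (compact/non-compact split, then the Green's-function dichotomy via the Perron method), and the overall architecture is sound: the mutual non-equivalence argument is complete as stated, and the compact case is correct, with the caveat that a harmonic function on a compact surface cannot have a single logarithmic pole (the flux obstruction), so the ``required singular behavior'' must be a dipole singularity of the form $\mathrm{Re}(1/z)$, whose conjugate you then integrate to get the degree-one meromorphic function.

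There is, however, one step in the hyperbolic case that fails as literally written. You claim $f = e^{-(g_{p_0}+ih)}$ is \emph{proper} onto $\mathbb{D}$ and conclude surjectivity from an unramified-covering argument. Properness would require $g_{p_0}\to 0$ at the ideal boundary, which is false in general (consider the Green's function of a slit disk), and an injective unramified holomorphic map into $\mathbb{D}$ need not be onto. The standard repairs are either (a) first prove injectivity, observe that $f(\mathcal{M})$ is then a simply connected proper subdomain of $\mathbb{C}$, and finish with the planar Riemann mapping theorem, or (b) use the extremal characterization of $e^{-g_{p_0}}$ together with the Koebe square-root trick to show the image exhausts $\mathbb{D}$. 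Relatedly, you underestimate the injectivity step: the comparison of $g_{p_0}$ with $g_q$ requires the symmetry $g_{p_0}(q)=g_q(p_0)$, which on an abstract surface is itself a nontrivial lemma (proved via Green's identity on an exhaustion), and the usual argument compares $\bigl|\frac{f-f(q)}{1-\overline{f(q)}\,f}\bigr|$ with $e^{-g_q}$ rather than the Green's functions directly. With these two points addressed, and with the period-vanishing argument for the conjugate function spelled out in both the hyperbolic and the Evans-potential (parabolic) branches, the plan would go through.
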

As our focus in this work is genus-0 closed surfaces, it is natural to consider the unit sphere as a standard parameter domain. Now, the problem is how to find a spherical conformal map.
This can be done by considering harmonic maps.

\begin{defn}[Harmonic maps]
 The {\it Dirichlet energy} for a map $f: {\mathcal{M}} \to \mathcal{N}$ is defined as
\begin{equation}\label{eqt:harmonic}
E(f) = \int_{\mathcal{M}} ||\nabla f||^2 dv_{\mathcal{M}}.
\end{equation}
In the space of mappings, the critical points of $E(f)$ are called {\it harmonic maps}.
\end{defn}

On triangulated meshes, the discrete Dirichlet energy is given by
\begin{equation}
E(f) = \sum_{[u,v] \in K} k_{uv} ||f(u)-f(v)||^2.
\end{equation}
Here $k_{uv} = \cot \alpha + \cot \beta$, where $\alpha,\beta$ are the angles opposite to the edge $[u,v]$. 

Consequently, the discretization of the Laplacian is given by
\begin{equation}
\Delta f = \sum_{[u,v] \in K} k_{uv} (f(u)-f(v)).
\end{equation}

For genus-0 closed surfaces, conformal maps are equivalent to harmonic maps \cite{Jost11}. Hence, the problem of finding a conformal map between two genus-0 closed surfaces is equivalent to an energy minimization problem.

\subsection{Quasiconformal maps}
In this section, we introduce the concept of quasiconformal maps, a generalization of conformal maps, and the related properties.

\begin{defn}[Quasiconformal maps]
A map $f: \mathbb{C} \to \mathbb{C}$ is said to be \emph{quasiconformal(QC)} if it satisfies the Beltrami equation
\begin{equation}\label{eqt:beltrami}
\frac{\partial f}{\partial \overline{z}} = \mu(z) \frac{\partial f}{\partial z}
\end{equation}
for some complex-valued function $\mu$ satisfying $||\mu||_{\infty}< 1$, and $\frac{\partial f}{\partial z}$ is non-vanishing almost everywhere. Here, the complex partial derivatives are defined by
\begin{equation}
 \frac{\partial f}{\partial z} := \frac{1}{2} \left(\frac{\partial f}{\partial x}  - i \frac{\partial f}{\partial y} \right)
 \ \ \text{ and } \ \
 \frac{\partial f}{\partial \overline{z}} := \frac{1}{2} \left(\frac{\partial f}{\partial x}  + i \frac{\partial f}{\partial y} \right).
\end{equation}
\end{defn}

$\mu$ is called the \emph{Beltrami coefficient} of the quasiconformal map $f$. $f$ is conformal around a small neighborhood of $p$ if and only if $\mu(p) = 0$, as Equation (\ref{eqt:beltrami}) becomes the Cauchy-Riemann equation in this situation. Hence, the Beltrami coefficient $\mu$ is closely related to the conformality distortion of $f$.

Besides, Beltrami coefficients are also related to the bijectivity of their associated quasiconformal maps, as explained by the following theorem.
\begin{thm}\label{thm:bijectivity}
If $f:\mathbb{C} \to \mathbb{C}$ is a $C^1$ map satisfying $\|\mu_f\|_{\infty} <1$, then $f$ is bijective.
\end{thm}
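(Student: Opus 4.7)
The plan is to use the Beltrami equation to show that the Jacobian determinant of $f$ is strictly positive, which gives local injectivity by the inverse function theorem, and then to promote this to global bijectivity by a covering-space argument enabled by the simple-connectedness of $\mathbb{C}$.

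First, I would compute the Jacobian of $f$ viewed as a map $\mathbb{R}^2 \to \mathbb{R}^2$. A direct calculation from the definitions of $\partial/\partial z$ and $\partial/\partial \bar z$ expresses the Jacobian as $J_f = |f_z|^2 - |f_{\bar z}|^2$. Substituting the Beltrami equation $f_{\bar z} = \mu f_z$ gives $J_f = |f_z|^2\bigl(1 - |\mu|^2\bigr)$. Since $\|\mu\|_\infty < 1$ and $f_z$ is non-vanishing almost everywhere, we get $J_f > 0$ almost everywhere; continuity of the partial derivatives (from the $C^1$ assumption) together with the openness of $\{f_z \neq 0\}$ upgrades this to strict positivity at every regular point.

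Next, the inverse function theorem, applied pointwise, shows that $f$ is a local $C^1$-diffeomorphism, in particular an open, orientation-preserving local homeomorphism. At this stage local injectivity is in hand, but global injectivity is a much stronger property, since a local diffeomorphism of $\mathbb{C}$ need not be globally injective (for instance $z \mapsto e^z$). The standard route across this gap is to show that $f$ is proper — typically by imposing a normalization at infinity or using a boundary condition — after which $f$ becomes a covering map $\mathbb{C} \to \mathbb{C}$. As the target is simply connected, every such covering is trivial, and a single-sheeted cover is a homeomorphism, giving bijectivity.

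The main obstacle I anticipate is precisely this local-to-global step, because the hypothesis as stated does not include a properness or growth condition and is therefore not strong enough in isolation to yield global bijectivity. In the setting of the present paper the theorem is invoked qualitatively, and the full technical statement (with the additional normalization) can be found in the quasiconformal-theory references already cited in the background section.
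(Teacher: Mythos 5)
The paper does not actually prove this theorem; it is stated as a background fact imported from the quasiconformal-theory references, so there is no internal proof to compare against. Your local analysis is the standard and correct one: writing $J_f=|f_z|^2-|f_{\bar z}|^2$ and substituting the Beltrami equation gives $J_f=|f_z|^2\bigl(1-|\mu_f|^2\bigr)>0$ wherever $f_z\neq 0$, so $f$ is an orientation-preserving local diffeomorphism there. More importantly, the obstacle you flag at the local-to-global step is genuine and not merely a technicality you failed to overcome: as literally stated the theorem is false. The map $z\mapsto e^z$ is $C^1$ (indeed holomorphic) with $\mu_f\equiv 0$ and $f_z$ never vanishing, yet it is neither injective nor surjective as a map $\mathbb{C}\to\mathbb{C}$. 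Some additional hypothesis --- properness, a normalization at infinity such as extending to a homeomorphism of $\overline{\mathbb{C}}$ fixing $\infty$, or a prescribed boundary correspondence --- is required before the covering-space argument closes the gap. You were right not to paper over this.

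For the purpose this theorem serves in the paper, the relevant version is the one for the discrete setting: the map $h$ produced by the Linear Beltrami Solver is a piecewise linear map on a bounded triangulated planar domain with fixed images of the outermost triangle's vertices, and there bijectivity follows from the positivity of the Jacobian on each face (your computation, applied facewise with $|\mu(T)|<1$) combined with a degree-theory or argument-principle count over the boundary, which supplies exactly the properness you identified as missing. If you restate the theorem with a normalization (e.g.\ $f$ extends to a self-homeomorphism germ at $\infty$, or $f$ is proper), your outline completes into a correct proof; without it, no proof exists because the claim fails.
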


In addition, the \emph{maximal quasiconformal dilation} of $f$ is given by
\begin{equation} \label{eqt:maximalqcdilation}
K = \frac{1+||\mu||_{\infty}}{1-||\mu||_{\infty}}.
\end{equation}
A geometrical illustration of quasiconformal maps is shown in Figure \ref{fig:qcmap}.

\begin{figure}[!t]
 \centering
 \includegraphics[width=0.8\textwidth]{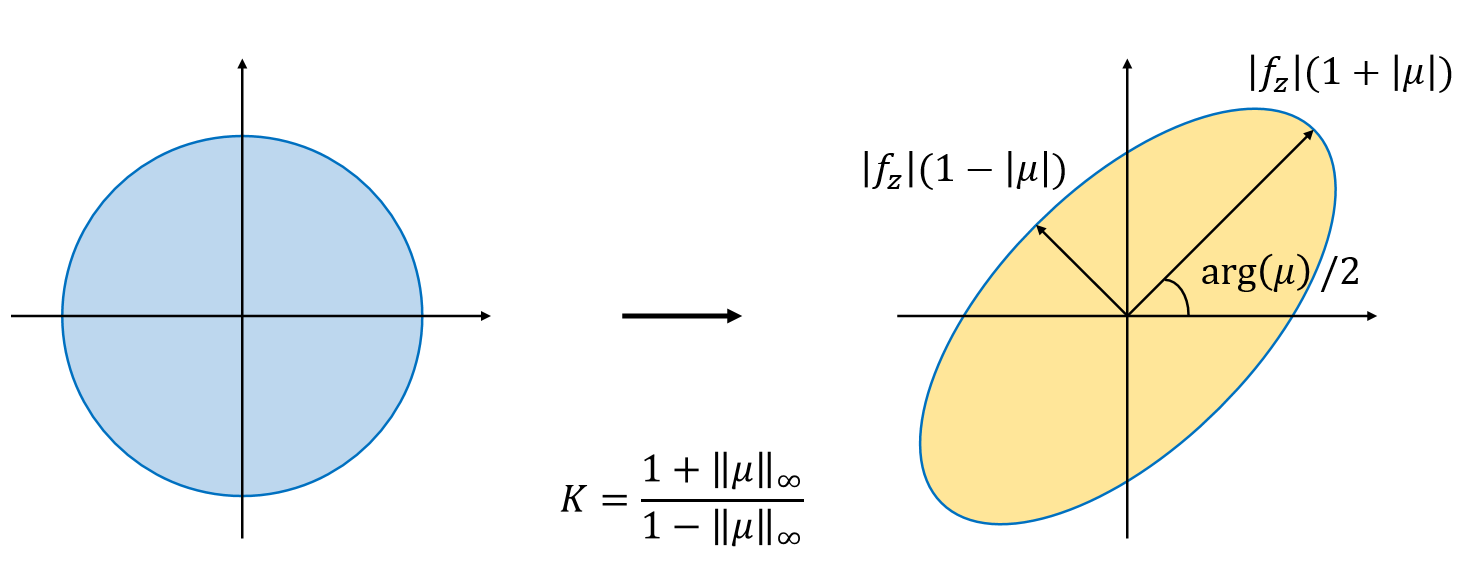}
 \caption{An illustration of quasiconformal maps.}
 \label{fig:qcmap}
\end{figure}


Conversely, with a given complex function, a quasiconformal map can also be computed. More specifically, given a Beltrami coefficient $\mu:\mathbb{C}\to \mathbb{C}$ with $\|\mu\|_\infty < 1$, there exists a quasiconformal map satisfying the Beltrami equation (\ref{eqt:beltrami}) in the distribution sense \cite{Gardiner00}. 

To explicitly compute the quasiconformal map $f = u+iv$ with the given Beltrami coefficient $\mu = \rho + i \tau$, note that from the Beltrami Equation (\ref{eqt:beltrami}), each pair of the partial derivatives $v_x, v_y$ and $u_x, u_y$ can be expressed as linear combinations of the other \cite{Lui13}:
\begin{equation} \label{eqt:firstorder}
 \begin{split}
  v_y &= \alpha_1 u_x + \alpha_2 u_y; \\
  -v_x &= \alpha_2 u_x + \alpha_3 u_y,
 \end{split}
  \ \ \text{ and } \ \
 \begin{split}
  -u_y &= \alpha_1 v_x + \alpha_2 v_y; \\
  u_x &= \alpha_2 v_x + \alpha_3 v_y,
 \end{split}
\end{equation}
where $\alpha_1 = \frac{(\rho-1)^2+\tau^2}{1-\rho^2-\tau^2}; \alpha_2 = -\frac{2\tau}{1-\rho^2-\tau^2}; \alpha_3 = \frac{(1+\rho)^2 + \tau^2}{1-\rho^2-\tau^2}$. Since $\nabla \cdot \left( \begin{array}{c}
 -v_y \\ v_x
\end{array} \right) = 0$ and $\nabla \cdot \left( \begin{array}{c}
 -u_y \\ u_x
\end{array} \right) = 0$ , $f$ can be obtained by solving
\begin{equation}\label{eqt:secondorder}
\nabla \cdot \left(A \left(\begin{array}{c}
u_x\\
u_y \end{array}\right) \right) = 0\ \ \mathrm{and}\ \ \nabla \cdot \left(A \left(\begin{array}{c}
v_x\\
v_y \end{array}\right) \right) = 0
\end{equation}
where
$\displaystyle A = \left( \begin{array}{cc} \alpha_1 & \alpha_2\\
\alpha_2 & \alpha_3 \end{array}\right)$. Equation (\ref{eqt:secondorder}) is called the \emph{generalized Laplace equation}. 


In the discrete case, the Beltrami coefficients can be approximated on every triangular face. Let $f:K_1 \to K_2$ be a quasiconformal map between two triangulated meshes $K_1, K_2$, and let $T_1, T_2$ be two corresponding faces on $K_1, K_2$ respectively. Suppose $T_1 = [a_1+i\ b_1, a_2+ i\ b_2, a_3 + i\ b_3]$ and $T_2 = [w_1, w_2, w_3]$, where $a_i, b_i \in \mathbb{R}$ for all $i$. The Beltrami coefficient of $f$ is approximated on $T_1$ by
\begin{equation}\label{eqt:discreteBC}
\mu_f(T_1) = \frac{\frac{1}{2} \left(D_x + i\ D_y \right) \left( \begin{array}{c} w_1 \\ w_2 \\ w_3 \end{array} \right) }{\frac{1}{2} \left(D_x - i\ D_y \right) \left( \begin{array}{c} w_1 \\ w_2 \\ w_3 \end{array} \right) },
\end{equation}
where \begin{equation}
D_x = \frac{1}{2 Area(T_1)} \left( \begin{array}{c} b_3-b_2 \\ b_1-b_3 \\ b_2-b_1 \end{array} \right)^t \ \ \mathrm{and}\ \
D_y = -\frac{1}{2 Area(T_1)} \left( \begin{array}{c} a_3-a_2 \\ a_1-a_3 \\ a_2-a_1 \end{array} \right)^t.
\end{equation}

Similarly, $\alpha_1, \alpha_2, \alpha_3$ in Equation (\ref{eqt:firstorder}) can be discretized. Ultimately, the elliptic PDEs (\ref{eqt:secondorder}) can be discretized into sparse symmetric positive definite linear systems as described in the Linear Beltrami Solver (LBS) method \cite{Lui13}.

%
%

It is noteworthy that the focus in this section is only the quasiconformal maps on the complex plane. Nevertheless, since in this work we only consider genus-0 closed surfaces, which are conformally equivalent to $\mathbb{S}^2$ and hence the extended complex plane, the above concepts and discretizations can be naturally extended for our study.

\section{Our proposed method} \label{main}
Note that quasiconformal maps are flexible and not unique in general. Therefore, it is desirable to have an algorithm for computing a spherical quasiconformal parameterization based on an user-defined quasiconformal distortion. The user-defined distortion can be freely set in order to fit into different applications. To achieve this goal, we first develop a measurement of qausiconformal distortion. Then, we propose a fast algorithm to compute a spherical quasiconformal parameterization with a given distortion.

\subsection{Quasiconformal dilation}
It is desirable to have a quantity that accurately represents the quasiconformality and is easy to compute. For spherical conformal maps, it is common to use the angle difference between the three angles of a triangular face on the input mesh and those of the face on the sphere as a measure of the conformality. Specifically, a map is with good conformality on the face if the three angle differences are all close to 0, or equivalently, if the mean of the absolute angle differences is close to 0. However, the measurement is not appropriate for the case of spherical quasiconformal maps. For instance, under the shear mapping $\begin{pmatrix}                                                                                                                                                                                                                                                                                                                                                                                                                                                                                                                                                                             x \\ y                                                                                                                                                                                                                                                                                                                                                                                                                                                                                                                                                  \end{pmatrix} \mapsto \begin{pmatrix}                                                                                                                                                                                                                                                                                                                                                                                                                                                                                                                                                                             x+ \lambda y \\ y                                                                                                                                                                                                                                                                                                                                                                                                                                                                                                                                                  \end{pmatrix} 
$, the three angle differences are highly different from each other and none of them can accurately represent the quasiconformality or the level of the distortion. Hence, instead of the angles, it is desirable to have the user-defined distortion defined on every triangular face of the input mesh.

In the following, we consider the \emph{dilation} on every triangular face as a measurement of quasiconformality. Mathematically, let $f:\mathbb{C} \to \mathbb{C}$ be a quasiconformal map. The \emph{dilation} of $f$ at a point $z$ is defined by
\begin{equation} 
K_f (z) = \frac{1+|\mu_f(z)|}{1-|\mu_f(z)|},
\end{equation}
where $\mu_f$ is the Beltrami coefficient of $f$. Geometrically, the dilation is the ratio of the length of the axes shown in Figure \ref{fig:qcmap} under the quasiconformal map $f$. 

The dilation of $f$ is related to the maximal quasiconformal dilation $K$ in Equation (\ref{eqt:maximalqcdilation}). More specifically, we have
\begin{equation}
K = \sup_z K_f (z).
\end{equation}

The map $f$ is said to be \emph{$p$-quasiconformal} if the maximal quasiconformal dilation is bounded above by $p$. In other words, every infinitesimal circle is mapped to an infinitesimal ellipse with eccentricity at most $p$. In particular, a conformal map is a $1$-quasiconformal map.

An important property about the maximal dilation of composition of quasiconformal mappings is as follows. 
\begin{prop} \label{prop:dilation}
If $f: \Omega_1 \to \Omega_2$ is a $K_1$-quasiconformal map and $g: \Omega_2 \to\Omega_3$ is a $K_2$-quasiconformal map, then $g \circ f$ is a $K_1K_2$-quasiconformal map. 
\end{prop}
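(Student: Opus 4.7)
The plan is to reduce the composition statement to a pointwise bound on the dilation function $K_{g \circ f}(z)$ and then take a supremum, leveraging the geometric interpretation of $K$ as the eccentricity of the infinitesimal ellipse image of an infinitesimal circle under the differential. Concretely, I would like to show
\[
K_{g \circ f}(z) \;\leq\; K_f(z) \cdot K_g(f(z))
\]
for (almost) every $z \in \Omega_1$. Once this is in hand, taking the supremum over $z$ and using $\sup_z K_f(z) \leq K_1$ and $\sup_z K_g(f(z)) \leq \sup_w K_g(w) \leq K_2$ yields $K_{g\circ f} \leq K_1 K_2$, which is the desired conclusion.

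First, I would invoke the chain rule: viewing $f$ and $g$ as differentiable maps between planar domains, the real Jacobian satisfies $D(g\circ f)(z) = Dg(f(z)) \cdot Df(z)$ as $\mathbb{R}$-linear maps $\mathbb{R}^2 \to \mathbb{R}^2$ (strictly speaking this holds almost everywhere for the weak derivatives of quasiconformal maps, which suffices). Next, I would interpret the dilation via singular values: for an invertible orientation-preserving linear map $A$ with singular values $\sigma_1(A) \geq \sigma_2(A) > 0$, the image of the unit circle is an ellipse with semi-axes $\sigma_1(A)$ and $\sigma_2(A)$, so its eccentricity $\sigma_1(A)/\sigma_2(A)$ coincides with the dilation $(1+|\mu|)/(1-|\mu|)$ of the Beltrami-type coefficient of $A$. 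Hence $K_f(z) = \sigma_1(Df(z))/\sigma_2(Df(z))$ and similarly for $g$ and $g\circ f$.

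The core inequality then reduces to a linear-algebra fact about singular values of a product $AB$ of invertible $2\times 2$ matrices: $\sigma_1(AB) \leq \sigma_1(A)\,\sigma_1(B)$ (which is just $\|AB\| \leq \|A\|\|B\|$), and $\sigma_2(AB) \geq \sigma_2(A)\,\sigma_2(B)$ (which follows from applying the first inequality to $(AB)^{-1} = B^{-1}A^{-1}$ and using $\sigma_2(M) = \|M^{-1}\|^{-1}$). Dividing gives $\sigma_1(AB)/\sigma_2(AB) \leq [\sigma_1(A)/\sigma_2(A)] \cdot [\sigma_1(B)/\sigma_2(B)]$, which, applied at the point $z$ with $A = Dg(f(z))$ and $B = Df(z)$, is exactly the pointwise bound displayed above.

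The main obstacle, if one insists on full rigor rather than a formal calculation, is handling the regularity issue: a generic $K$-quasiconformal map is not $C^1$, so the chain rule must be justified in the Sobolev/a.e.\ sense, and the argument about the pushforward point $f(z)$ lying in the full-measure set where $Dg$ is well-defined requires Lusin's $(N)$-property for quasiconformal maps (so that null sets are mapped to null sets). An alternative route avoiding singular values is to compute $\mu_{g \circ f}$ directly via the classical composition formula
\[
\mu_{g\circ f}(z) = \frac{\mu_f(z) + (\mu_g \circ f)(z)\,\theta(z)}{1 + \overline{\mu_f(z)}\,(\mu_g\circ f)(z)\,\theta(z)}, \qquad \theta(z) = \overline{\partial_z f}/\partial_z f,
\]
and bound $|\mu_{g\circ f}|$ by a Möbius-type estimate in the unit disk, but this route is algebraically messier while yielding the same constant, so I would present the singular-value proof as the main argument.
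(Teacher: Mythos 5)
Your proposal is correct. Note first that the paper itself offers no proof of this proposition: it is quoted as a classical property of quasiconformal maps (it follows from standard references such as Ahlfors or Lehto--Virtanen), so there is no ``paper proof'' to compare against line by line. Your singular-value argument is the standard one and it is sound: the identification $K_f(z)=\sigma_1(Df(z))/\sigma_2(Df(z))$ is consistent with the paper's definition $K_f=(1+|\mu_f|)/(1-|\mu_f|)$ (for the linear map $w=az+b\bar z$ with $|b|<|a|$ one has $\sigma_1=|a|+|b|$, $\sigma_2=|a|-|b|$, $\mu=b/a$), the inequalities $\sigma_1(AB)\le\sigma_1(A)\sigma_1(B)$ and $\sigma_2(AB)\ge\sigma_2(A)\sigma_2(B)$ are exactly as you derive them, and taking the supremum of the pointwise bound gives the stated constant, which matches the paper's definition of $p$-quasiconformality via $K=\sup_z K_f(z)$. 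You also correctly flag the only genuinely delicate point, namely that the chain rule holds only almost everywhere for maps that are merely in $W^{1,2}_{\mathrm{loc}}$; one small correction there: what is needed is that the $f$-\emph{preimage} of the null set where $Dg$ fails to exist or is singular is itself null, i.e.\ the condition $(N^{-1})$ for $f$ (equivalently, Lusin's condition $(N)$ for $f^{-1}$, which holds because $f^{-1}$ is again quasiconformal), rather than condition $(N)$ for $f$ itself. The alternative route via the Beltrami composition formula that you sketch is also valid and is the form in which the result is usually invoked in this literature; either argument fully justifies the proposition as stated.
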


In the discrete case, since the Beltrami coefficients are approximated on every triangular face as described in Equation (\ref{eqt:discreteBC}), it is natural to define the dilation on every face. We have the following discretization:

\begin{defn}[Discrete dilation]
Let $f:M_1 \to M_2$ be a quasiconformal map between two triangulated meshes $M_1, M_2$ on $\mathbb{C}$. For every triangular face $T$ of $M_1$, the discrete dilation of $f$ on $T$ is defined by
\begin{equation}\label{eqt:qcdilation}
 K_f (T) = \frac{1+|\mu_f (T)|}{1-|\mu_f(T)|},
\end{equation}
where $\mu_f(T)$ is the Beltrami coefficient of $f$ approximated on $T$.
\end{defn}

Moreover, the measurement of the dilation can be naturally extended to quasiconformal maps between meshes in $\mathbb{R}^3$.

\begin{defn}[Discrete dilation in $\mathbb{R}^3$]
Let $f:M_1 \to M_2$ be a quasiconformal map between two triangulated meshes $M_1, M_2$ in $\mathbb{R}^3$, and let $T_1, T_2$ be two corresponding triangular faces on $K_1, K_2$ respectively. Let $\phi_i: T_i \to \mathbb{C}$ be an isometric embedding of $T_i$ onto $\mathbb{C}$, where $i = 1,2$. The discrete dilation of $f$ on $T_1$ is defined by
\begin{equation}
K_{\tilde{f}} (\phi_1(T_1)),
\end{equation}
where $\tilde{f}: \phi_1(T_1) \to \phi_2(T_2)$ is a quasiconformal map on $\mathbb{C}$.
\end{defn}
Note that the above definition is well-defined because only the norm of the Beltrami coefficients is considered. With the above concepts, we are ready to introduce our proposed spherical quasiconformal parameterization algorithm for a genus-0 closed triangulated mesh $M$ and a user-defined quasiconformal dilation $K \geq 1$ defined on every face.

\subsection{Initial map}
We first compute a spherical conformal parameterization $f:M \to \mathbb{S}^2$ as an initialization. Among all existing algorithms for computing the spherical conformal parameterization, we choose the fast spherical conformal parameterization algorithm in \cite{Choi15a} for three reasons. Firstly, the algorithm only involves solving two sparse linear systems and hence the computation is highly efficient. Secondly, the algorithm in \cite{Choi15a} achieves the best conformality when compared with the existing approaches. The conformality of the initial spherical map is important in the subsequent steps. Thirdly, the algorithm in \cite{Choi15a} results in a bijective spherical parameterization. The bijectivity is also crucial for the computation in the remaining steps. 

\subsection{Optimally projecting the sphere onto the complex plane} \label{regular}
After obtaining the initial spherical parameterization, we choose a triangular face $T=[v_1,v_2,v_3]$ on $f(M)$ such that $T$ and its neighboring triangular faces are the most regular. Then, we apply a rotation $\psi$ on $f(M)$ such that the centroid of $T$ lies on the positive $z$-axis, followed by the stereographic projection $P_N$.

The regularity of $T$ and its neighboring faces is important because of the stereographic projection $P_N$. When applying the stereographic projection, the north pole $(0,0,1)$ is mapped to $\infty$ on the extended complex plane, and the northernmost region on $\mathbb{S}^2$ is mapped to the outermost region on the plane. In particular, $T$ is mapped to a big triangle on the plane. Now, denote the geodesic between $v_i$ and $v_j$ on $\mathbb{S}^2$ by $g_{v_i v_j}$. Note that $g_{v_i v_j}$ is a circular arc on $\mathbb{S}^2$, while the edge $e_{v_i v_j}$ connecting $v_i$ and $v_j$ on $M$ is an Euclidean straight line. On $\mathbb{S}^2$, this discrepancy between $g_{v_i v_j}$ and $e_{v_i v_j}$ may not be very large. However, under the stereographic projection, this discrepancy between $P_N(g_{v_i v_j})$ and the Euclidean straight line $e_{P_N(v_i) P_N(v_j)}$ becomes serious. 

In the continuous case, under the stereographic projection, all other vertices are mapped to the interior of the region enclosed by $g_{v_1 v_2}$, $g_{v_2 v_3}$ and $g_{v_3 v_1}$. However, in the discrete case, if $T$ and its neighboring faces are not regular enough, some vertices may be mapped outside the Euclidean triangle $[P_N(v_1), P_N(v_2), P_N(v_3)]$. The outlying vertices causes computational difficulty in the following step, in which only the three vertices $P_N(v_1), P_N(v_2), P_N(v_3)$ are involved in the boundary constraints. Hence, a suitable choice of $T$ is necessary.

\subsection{Achieving the desired quasiconformality}

By the stereographic projection, the chosen triangular face $T$ is mapped to a big triangle on $\mathbb{C}$. Next, we compose the map with a quasiconformal map $h$ that satisfies the prescribed dilation. 

To compute a quasiconformal map using LBS \cite{Lui13}, 3 point boundary constraints of the outermost triangular face $T$ are required. Moreover, the boundary constraints must be set optimally, otherwise the prescribed quasiconformality cannot be achieved. More specifically, the target location of the boundary points of $T$ should satisfy the prescribed quasiconformal dilation $K(T)$.

To explicitly compute the image of $T$ under the prescribed dilation $K(T)$, we denote $T = [x_1 + i y_1, x_2 + i y_2, x_3 + i y_3]$. By Equation (\ref{eqt:qcdilation}), we define the Beltrami coefficient $\mu(T)$ on the triangular face $T$ by
\begin{equation}
 \mu(T) = \frac{K(T)-1}{K(T)+1}.
\end{equation}
Note that the argument of $\mu(T)$ is set to be 0 without loss of generality. 

Since $h$ is piecewise linear, we have
\begin{equation} \label{eqt:piecewiselinear}
 h|_T \begin{pmatrix} x_i \\ y_i \end{pmatrix}  
 = \begin{pmatrix}
               a_T x_i + b_T y_i + r_T \\
               c_T x_i + d_T y_i + s_T
              \end{pmatrix}
\end{equation}
for $i=1,2,3$, where $a_T, b_T, c_T, d_T, r_T, s_T$ are to be determined.

Without loss of generality, we can assume that $h|_T \begin{pmatrix} x_1 \\ y_1 \end{pmatrix} = \begin{pmatrix} x_1 \\ y_1 \end{pmatrix}$ and $h|_T \begin{pmatrix} x_2 \\ y_2 \end{pmatrix} = \begin{pmatrix} x_2 \\ y_2 \end{pmatrix}$.

Also, by Equation (\ref{eqt:firstorder}), we have
\begin{equation}\label{eqt:firstorder_derive}
  \begin{split}
  d_T &= \alpha_1 a_T + \alpha_2 b_T; \\
  -c_T &= \alpha_2 a_T + \alpha_3 b_T,
 \end{split}
\end{equation}
where 
\begin{equation}
\alpha_1 = \frac{(\rho_T-1)^2+\tau_T^2}{1-\rho_T^2-\tau_T^2}; \ \ \ \alpha_2 = -\frac{2\tau_T}{1-\rho_T^2-\tau_T^2}; \ \ \ \alpha_3 = \frac{(1+\rho_T)^2 + \tau_T^2}{1-\rho_T^2-\tau_T^2}.
\end{equation}
Here, $\rho(T)$ and $\tau(T)$ are respectively the real part and the imaginary part of $\mu(T)$. By our construction of $\mu(T)$ introduced before, we have $\rho(T) = \frac{K(T)-1}{K(T)+1}$ and $\tau(T) = 0$. Hence, we have 
\begin{equation}
 \alpha_1 = \frac{\left(\frac{K(T)-1}{K(T)+1}-1\right)^2+0^2}{1-\left(\frac{K(T)-1}{K(T)+1}\right)^2-0^2} = \frac{1-\frac{K(T)-1}{K(T)+1}}{1+\frac{K(T)-1}{K(T)+1}} = -\frac{1}{K(T)}.
\end{equation}
Obviously, 
\begin{equation}
 \alpha_2 = 0.
\end{equation}
Lastly, we have
\begin{equation}
 \alpha_3 = \frac{\left(1+\frac{K(T)-1}{K(T)+1}\right)^2+0^2}{1-\left(\frac{K(T)-1}{K(T)+1}\right)^2-0^2} = \frac{1+\frac{K(T)-1}{K(T)+1}}{1-\frac{K(T)-1}{K(T)+1}} = K(T).
\end{equation}


Altogether, $a_T, b_T, c_T, d_T, r_T, s_T$ can be explicitly solved by the following linear system:
\begin{equation} \label{eqt:bdy}
 \begin{pmatrix}
  x_1 & y_1 & 0 & 0 & 1 & 0 \\
  0 & 0 & x_1 & y_1 & 0 & 1 \\
  x_2 & y_2 & 0 & 0 & 1 & 0 \\
  0 & 0 & x_2 & y_2 & 0 & 1 \\
  \frac{1}{K(T)} & 0 & 0 & -1 & 0 & 0 \\
  0 & K(T) & 1 & 0 & 0 & 0 \\
 \end{pmatrix} 
 \begin{pmatrix}
  a_T \\ b_T \\ c_T \\ d_T \\ r_T \\ s_T \\
 \end{pmatrix} = 
 \begin{pmatrix}
  x_1 \\ y_1 \\ x_2 \\ y_2 \\ 0 \\ 0 \\
 \end{pmatrix}.
\end{equation}
Here, the first four equations come from Equation (\ref{eqt:piecewiselinear}), and the last two equations come from Equation (\ref{eqt:firstorder_derive}). The existence and uniqueness of $(a_T, b_T, c_T, d_T, r_T, s_T)$ is guaranteed by the following proposition.
\begin{prop}
The matrix in Equation (\ref{eqt:bdy}) is nonsingular.
\end{prop}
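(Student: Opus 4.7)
The plan is to show the matrix has trivial kernel by analyzing the homogeneous system. Denote the kernel vector by $v = (a_T, b_T, c_T, d_T, r_T, s_T)^T$ and write out the six equations coming from $Mv = 0$.

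First, I would eliminate the translation parameters $r_T$ and $s_T$ by subtracting row~1 from row~3 and row~2 from row~4, which yields
\begin{equation*}
(x_1-x_2)\,a_T + (y_1-y_2)\,b_T = 0, \qquad (x_1-x_2)\,c_T + (y_1-y_2)\,d_T = 0.
\end{equation*}
Next, rows 5 and 6 of the system give the simple relations $d_T = a_T/K(T)$ and $c_T = -K(T)\,b_T$. Substituting these into the second of the above equations reduces the problem to the $2\times 2$ linear system
\begin{equation*}
\begin{pmatrix} x_1-x_2 & y_1-y_2 \\ \tfrac{1}{K(T)}(y_1-y_2) & -K(T)(x_1-x_2) \end{pmatrix}
\begin{pmatrix} a_T \\ b_T \end{pmatrix} = \begin{pmatrix} 0 \\ 0 \end{pmatrix}.
\end{equation*}

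The determinant of this $2\times 2$ matrix equals $-K(T)(x_1-x_2)^2 - \tfrac{1}{K(T)}(y_1-y_2)^2$. Since $K(T) \geq 1 > 0$, both terms are non-positive, and their sum vanishes only when $x_1 = x_2$ and $y_1 = y_2$. But $v_1$ and $v_2$ are distinct vertices of the non-degenerate triangular face $T$, so this is excluded, the $2\times 2$ determinant is strictly nonzero, and we conclude $a_T = b_T = 0$. The relations from rows 5 and 6 then force $c_T = d_T = 0$, after which rows 1 and 2 give $r_T = s_T = 0$. Hence $v = 0$ and the $6\times 6$ matrix is nonsingular.

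I do not expect a serious obstacle here; the only thing worth flagging is the need to invoke the geometric fact that $v_1 \neq v_2$ (which follows from $T$ being a genuine triangular face) to rule out the degenerate case where the $2\times 2$ determinant vanishes. The positivity of $K(T)$ ensures that the two squared terms cannot cancel one another, so the argument goes through cleanly.
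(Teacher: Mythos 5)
Your proof is correct and follows essentially the same route as the paper: after the same row eliminations of $r_T,s_T$, both arguments reduce to the quantity $-K(T)(x_1-x_2)^2-\tfrac{1}{K(T)}(y_1-y_2)^2$ and conclude from the non-degeneracy of $T$ and $K(T)\geq 1$. The only cosmetic difference is that you phrase it as a trivial-kernel argument while the paper evaluates the $6\times 6$ determinant directly.
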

\begin{proof}
By a direct calculation, we have
\begin{align}
\det \begin{pmatrix}
  x_1 & y_1 & 0 & 0 & 1 & 0 \\
  0 & 0 & x_1 & y_1 & 0 & 1 \\
  x_2 & y_2 & 0 & 0 & 1 & 0 \\
  0 & 0 & x_2 & y_2 & 0 & 1 \\
  \frac{1}{K(T)} & 0 & 0 & -1 & 0 & 0 \\
  0 & K(T) & 1 & 0 & 0 & 0 \\
 \end{pmatrix}  
 &= \det \begin{pmatrix}
  x_1 & y_1 & 0 & 0 & 1 & 0 \\
  0 & 0 & x_1 & y_1 & 0 & 1 \\
  x_2-x_1 & y_2-y_1 & 0 & 0 & 0 & 0 \\
  0 & 0 & x_2-x_1 & y_2-y_1 & 0 & 0 \\
  \frac{1}{K(T)} & 0 & 0 & -1 & 0 & 0 \\
  0 & K(T) & 1 & 0 & 0 & 0 \\
 \end{pmatrix} \\
 & = \det \begin{pmatrix}
  x_2-x_1 & y_2-y_1 & 0 & 0 \\
  0 & 0 & x_2-x_1 & y_2-y_1 \\
  \frac{1}{K(T)} & 0 & 0 & -1 \\
  0 & K(T) & 1 & 0 \\
 \end{pmatrix} \\
 &= - K(T) (x_2 - x_1)^2 - \frac{1}{K(T)} (y_2-y_1)^2.
\end{align}

Since $T$ is non-degenerate, we have $(x_1,y_1) \neq (x_2, y_2)$. Also, note that $K \geq 1$. It follows that 
\begin{equation}
\det \begin{pmatrix}
  x_1 & y_1 & 0 & 0 & 1 & 0 \\
  0 & 0 & x_1 & y_1 & 0 & 1 \\
  x_2 & y_2 & 0 & 0 & 1 & 0 \\
  0 & 0 & x_2 & y_2 & 0 & 1 \\
  \frac{1}{K(T)} & 0 & 0 & -1 & 0 & 0 \\
  0 & K(T) & 1 & 0 & 0 & 0 \\
 \end{pmatrix}   \neq 0. 
 \end{equation}
\end{proof}

After obtaining $a_T, b_T, c_T, d_T, r_T, s_T$, we can explicitly compute $h|_T \begin{pmatrix} x_3 \\ y_3 \end{pmatrix}$ using Equation (\ref{eqt:piecewiselinear}). The above computations give us the desired boundary condition for $h(x_1 + i y_2), h(x_1 + i y_2)$ and $h(x_3 + i y_3)$ of the triangular face $T$. 

With the above boundary conditions, we apply the Linear Beltrami Solver (LBS) \cite{Lui13} for computing a quasiconformal map $h$ that satisfies the prescribed quasiconformal distortion. More specifically, by Equation (\ref{eqt:qcdilation}), we have 
\begin{equation}
 |\mu(F)| = \frac{K(F)-1}{K(F)+1}
\end{equation}
for all triangular faces $F$. We apply LBS with $\mu$ and the boundary constraints on $T$, obtaining the quasiconformal map $h$. It is noteworthy that since $\| \mu \|_{\infty} <1$, Theorem \ref{thm:bijectivity} guarantees the bijectivity of the map $h$.

Since $T$ may be severely distorted by the prescribed distortion, the origin may no longer be located inside $T$ under the quasiconformal map $h$. In this case, the resulting parameterization obtained by the inverse stereographic projection $P_S^{-1}$ may not be a sphere but only a portion of it. To overcome this problem, we perform a translation on $\mathbb{C}$ so that the centroid of the whole domain is at the origin. This ensures that $T$ will be the northernmost triangular face under $P_S^{-1}$.

Now, the desired quasiconformal distortion is achieved. However, as we have fixed two vertices of $T$ in computing the boundary constraints, the size of the whole triangular domain may not be optimal. More specifically, if the size of $T$ is too large, most vertices will be mapped to the northern hemisphere by $P_S^{-1}$. On the other hand, if the size of $T$ is too small, most vertices will be mapped to the southern hemisphere by $P_S^{-1}$. To achieve an optimal distribution on the spherical parameterization, we apply the balancing scheme in the fast spherical conformal parameterization algorithm \cite{Choi15a}. Based on Invariance Theorem in \cite{Choi15a}, the balancing scheme ensures that $T$ and the innermost triangle $t$ on $\mathbb{C}$ will be mapped to two triangles with similar size on the unit sphere under $P_S^{-1}$. This completes our task of computing a spherical quasiconformal parameterization with prescribed quasiconformal distortion.

It is noteworthy that our proposed algorithm only involves solving a few sparse linear systems. Hence, our algorithm is highly efficient in practice. Also, the desired quasiconformality of the spherical parameterization is guaranteed by Theorem \ref{prop:dilation}. Since the initial spherical map, the rotation and the stereographic projections are all conformal maps (i.e. $1$-quasiconformal maps) and $h$ is $K$-quasiconformal, the composition of the maps is also $K$-quasiconformal. Assembling all of the above steps, our proposed fast spherical quasiconformal (FSQC) parameterization algorithm is summarized in Algorithm \ref{algorithm}.

\begin{algorithm}[!h]
\KwIn{A genus-0 closed triangular mesh $M$, a user-defined quasiconformal dilation $K \geq 1$ defined on every face.}
\KwOut{A bijective spherical quasiconformal parameterization $\varphi:M \to \mathbb{S}^2$.}
\BlankLine
Compute a spherical conformal parameterization $f:M \to \mathbb{S}^2$ using the fast algorithm in \cite{Choi15a}\;
Choose a triangular face $T$ on $f(M)$ as described in Section \ref{regular}\;
Apply a rotation $\psi$ on $f(M)$ such that the centroid of $T$ lies on the positive $z$-axis\;
Apply the stereographic projection $P_N$ on $\psi(f(M)$\;
Compute a quasiconformal map $h: P_N(\psi(f(M))) \to \mathbb{C}$ with the prescribed distortion, and an appropriate boundary condition of the big triangle $T$\;
Perform a translation so that the centroid of the whole domain is at the origin\;
Apply the balancing scheme in \cite{Choi15a}\;
Apply the inverse stereographic projection $P_N^{-1}$ and denote the overall result by $\varphi$\;
\caption{Fast spherical quasiconformal (FSQC) parameterization}
\label{algorithm}
\end{algorithm}

\subsection{Remeshing via FSQC}

\begin{figure}[!t]
 \centering
 \includegraphics[width=0.9\textwidth]{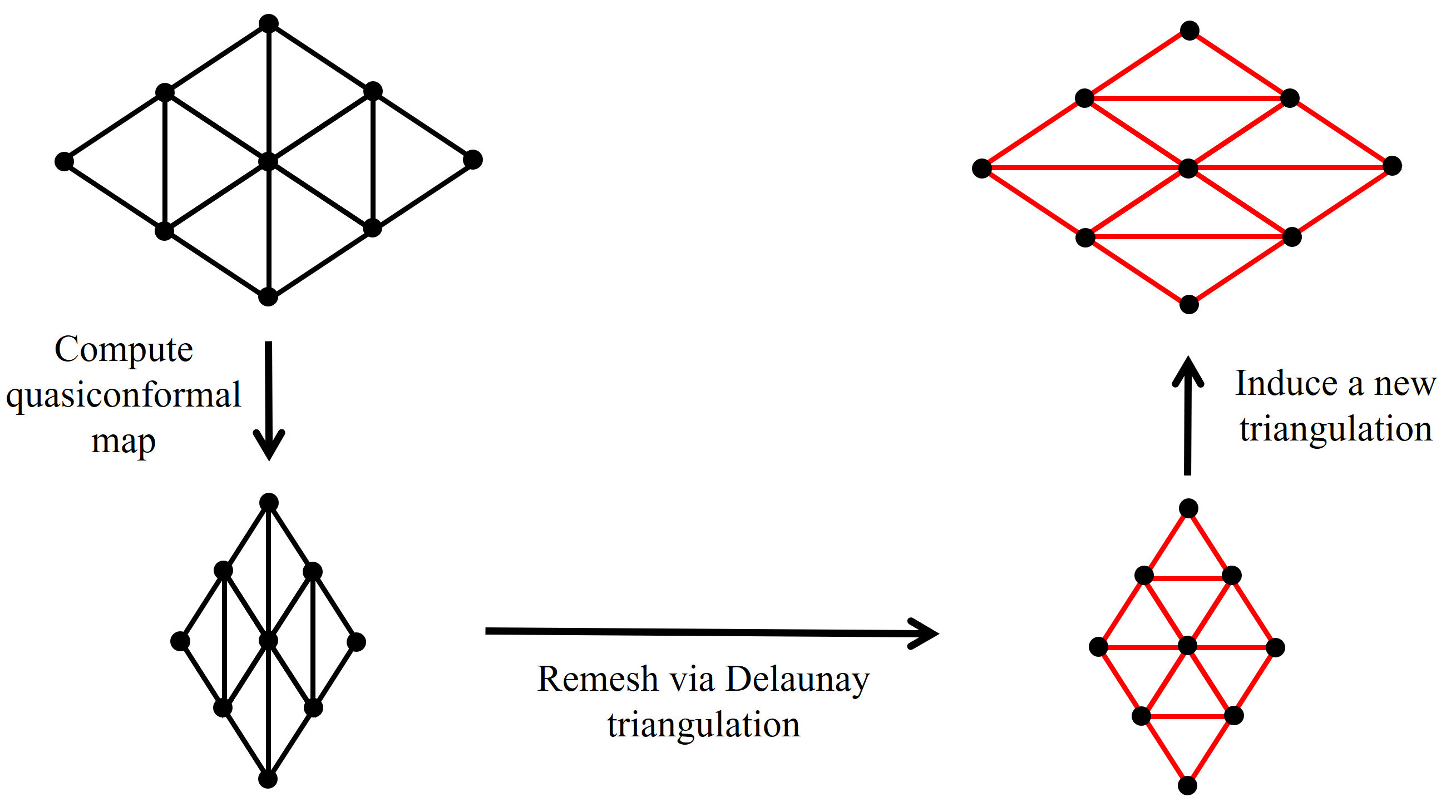}
 \caption{A simplified illustration of our proposed remeshing framework. To adaptive remesh a portion of a genus-0 closed surface, we can set a special quasiconformal dilation at that region and compute a spherical quasiconformal parameterization using FSQC. Then, we can apply the spherical Delaunay triangulation algorithm to remesh the corresponding region on the spherical parameterization. Finally, the resulting triangulation on the sphere induces a new triangulation of the original surface.}
 \label{fig:remeshing_procedure}
\end{figure}

The spherical qausi-conformal parameterization obtained by our FSQC algorithm can be used for remeshing an input genus-0 closed mesh $M = (V,F)$, where $V$ is the set of vertices and $F$ is the set of triangular faces of $M$. This brief idea of our framework is that we can apply existing triangulation algorithms, such as the spherical Delaunay triangulation algorithm, for creating a triangulation on the spherical parameterization of $M$. Then, the spherical triangulation induces a triangulation $F'$ on $M$ and this completes the task of remeshing $M$. 

A simplified illustration of our proposed remeshing framework is given in Figure \ref{fig:remeshing_procedure}. The brief idea of our proposed remeshing framework is as follows. Suppose we have a set of points at a sharp part (for instance, a tail) of a genus-0 closed surface. Note that a regular triangulation of the set of points may not lead to a good visualization of the sharpness of the part. Instead, it is desirable to form sharp triangles on the set of points. To achieve this, we compute a sphere quasiconformal parameterization that squeezes the region. Then, we apply existing triangulation algorithms on the simple spherical domain to construct a regular triangulation. The regular triangulation built on the spherical domain induces a triangulation on the original surface. Because of the quasiconformal distortion, the induced triangulation is with sharp triangles at the mentioned region. Therefore, the new triangulation enhances the visual quality of the sharp part of the surface. In the following, we explain our proposed remeshing framework in details. 

We start by rigorously introducing the Delaunay triangulation. Mathematically, the definition of the Delaunay triangulation is as follows.
\begin{defn}[Delaunay triangulations]
A triangulation of a set of points $\mathcal{P}$ is said to be \emph{Delaunay} if for any triangle $T$ in the triangulation, no point in $\mathcal{P}$ lies inside the circumcircle of $T$. 
\end{defn}

Consequently, Delaunay triangulations avoid sharp triangles and produce as many regular triangles as possible. As described in \cite{Choi15d}, for the case of conformal parameterization, a regular triangulation on the spherical parameterization induces a regular triangulation on the original surface. On the contrary, for the case of quasiconformal parameterization, the induced triangulation may not be regular due to the quasiconformal distortion. Nevertheless, it is the discrepancy caused by quasiconformal distortion that enables us to adaptively remesh a surface. 

To apply our proposed FSQC algorithm for adaptive surface remeshing, we now describe a strategy in setting the user-defined quasiconformal dilations. 

Let $R$ be a simply-connected set of triangular faces on $M=(V,F)$ that we want to irregularize. We set the user-defined quasiconformal dilation to be  
\begin{equation}
 K(T) \left\{ \begin{array}{ll}
                  = 1 & \text{ if } T \in F \setminus R,\\
                  \gg 1 & \text{ if } T \in R.
                \end{array}\right.
\end{equation}
Then, we select two vertices $p_1, p_2$ that represent the principal direction of the region. This can be done manually or by existing methods such as the Principal Component Analysis (PCA).

As the quasiconformal dilation $K$ only encodes the magnitude but not the direction of the desired distortion, we need to insert one extra step in running our proposed FSQC algorithm. More specifically, after Step 4 in Algorithm \ref{algorithm}, we have obtained $(P_N \circ \psi \circ f)(M) \subset \overline{\mathbb{C}}$. Before proceeding to Step 5 in Algorithm \ref{algorithm}, we rotate the entire planar domain by a map
\begin{equation}
 z \mapsto z e^{i \theta}
\end{equation}
where
\begin{equation}
 \theta = \text{Arg}((P_N \circ \psi \circ f)(p_2) - (P_N \circ \psi \circ f)(p_1)).
\end{equation}

This step ensures that the highlighted region $R$ will be squeezed in a direction perpendicular to the the line joining $(P_N \circ \psi \circ f)(p_1)$ and $(P_N \circ \psi \circ f)(p_2)$. Then, we continue the FSQC algorithm and obtain the final spherical parameterization $\varphi:M \to \mathbb{S}^2$. With the spherical parameterization $\varphi(M)$, we can apply the spherical Delaunay triangulation algorithm on the vertices of $\varphi(M)$. The Delaunay triangulation obtained on $\varphi(M)$ induces a triangulation $F'$ on the original surface $M$. It is noteworthy that because of the user-defined quasiconformal dilation, the artificially expanded region on $\varphi(M)$ leads to the formulation of squeezed triangles in the remeshing result $F'$.

Our proposed remeshing framework is summarized in Algorithm \ref{algorithm_remeshing}.

\begin{algorithm}[!h]
\KwIn{A genus-0 closed triangular mesh $M = (V,F)$.}
\KwOut{The remeshed surface $M' = (V,F')$.}
\BlankLine
Set the quasiconformal dilation $K=1$ for all triangular faces\;
Highlight a region to be adaptively remeshed and set $K \gg 1$ for the region, with $K = 1$ elsewhere\;
Select two vertices $p_1, p_2$ that represent the principal direction of the region\;
Apply our proposed FSQC with the quasiconformal dilation $K$, with an extra rotation of angle $\theta$ of all points on $\mathbb{C}$ right before Step 5 of Algorithm \ref{algorithm}. Here, $\theta = \text{Arg}((P_N \circ \psi \circ f)(p_2) - (P_N \circ \psi \circ f)(p_1))$\;
Apply the spherical Delaunay triangulation algorithm on the spherical parameterization\;
Obtain the induced triangulation $F$ from the spherical triangulation\;
\caption{Remeshing via FSQC}
\label{algorithm_remeshing}
\end{algorithm}

\section{Experimental Results} \label{experiment}

In this section, we demonstrate the effectiveness of our proposed fast spherical quasiconformal parameterization algorithm with application to adaptive surface remeshing. Various genus-0 closed triangulated meshes are adopted from the AIM@SHAPE Shape Repository \cite{aim@shape} and the Benchmark for 3D Mesh Segmentation \cite{Chen09} for testing our algorithm. Our algorithms are implemented in MATLAB. The spherical Delaunay triangulation algorithm in \cite{delaunay} is adopted for remeshing the spherical parameterizations. All experiments are performed on a PC with an Intel(R) Core(TM) i5-3470 CPU @3.20 GHz processor and 8.00 GB RAM.

\begin{figure}[!t]
 \centering
 \includegraphics[width=0.525\textwidth]{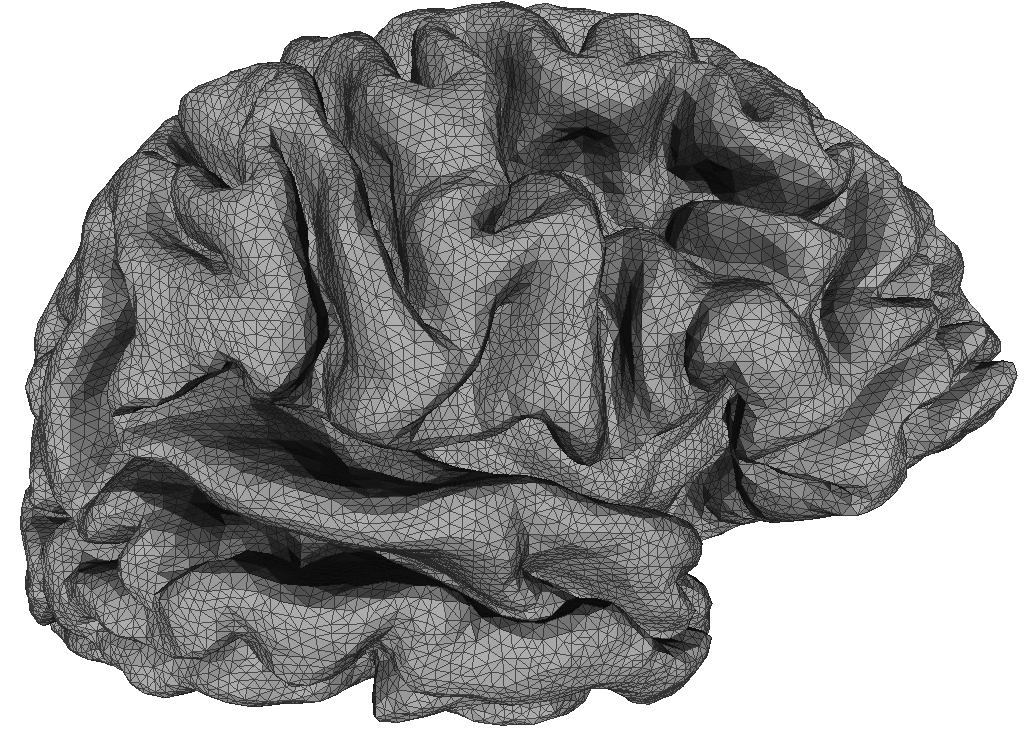}
 \includegraphics[width=0.375\textwidth]{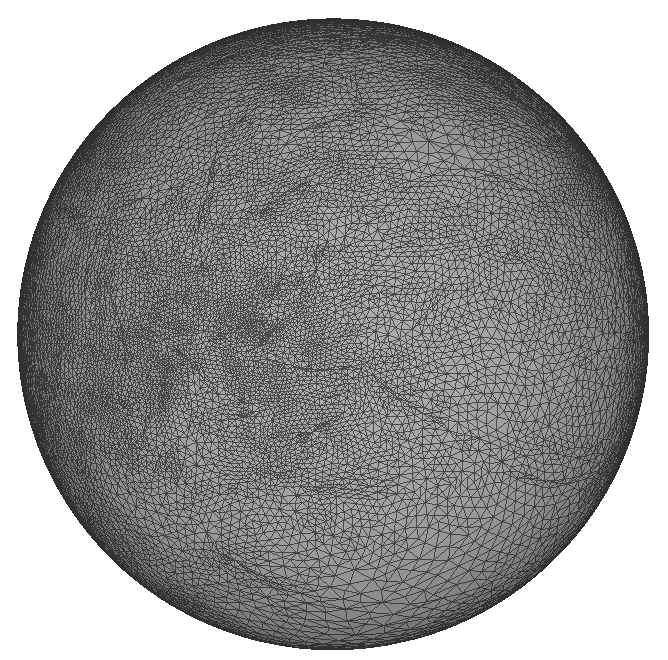} \\ \vspace{8mm}
 \includegraphics[width=0.45\textwidth]{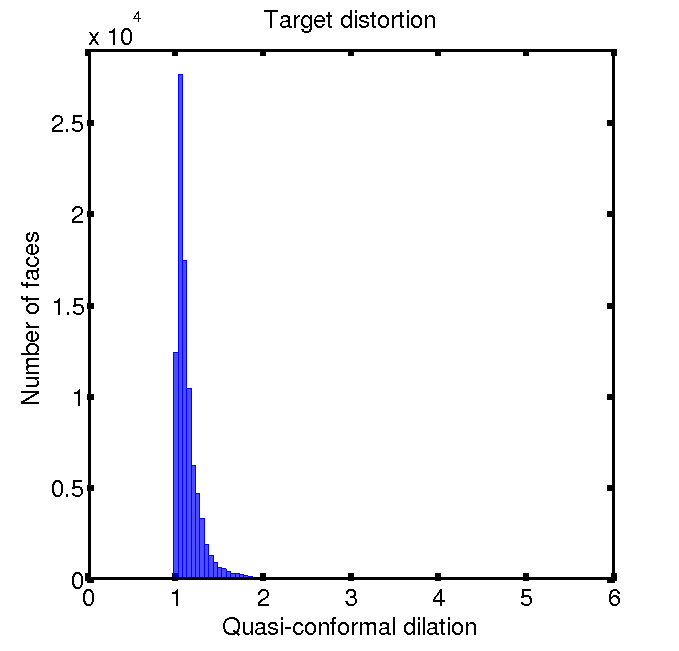}
 \includegraphics[width=0.45\textwidth]{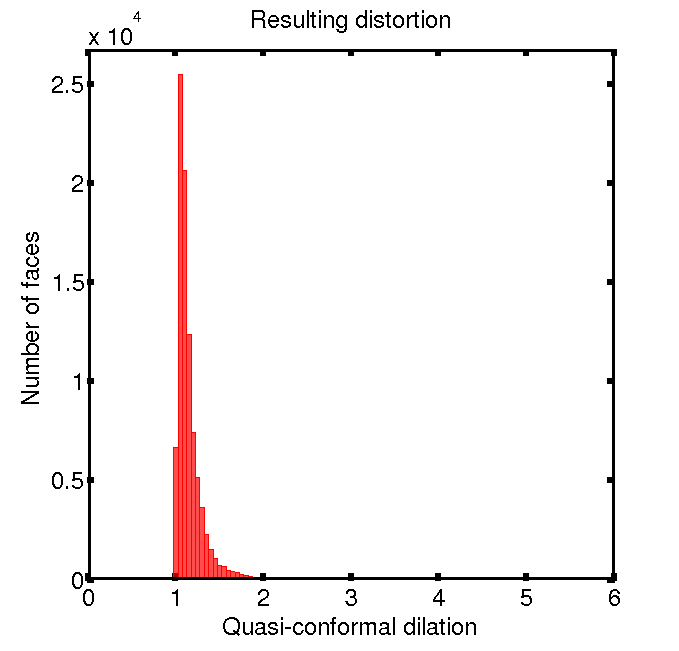}
 \caption{A brain and the spherical quasiconformal parameterization obtained by our algorithm. Top left: the input surface. Top right: the spherical parameterization. Bottom left: The target quasiconformal distortion. Bottom right: The resulting quasiconformal distortion of the parameterization.}
 \label{fig:brain}
\end{figure}

\begin{figure}[!t]
 \centering
 \includegraphics[width=0.35\textwidth]{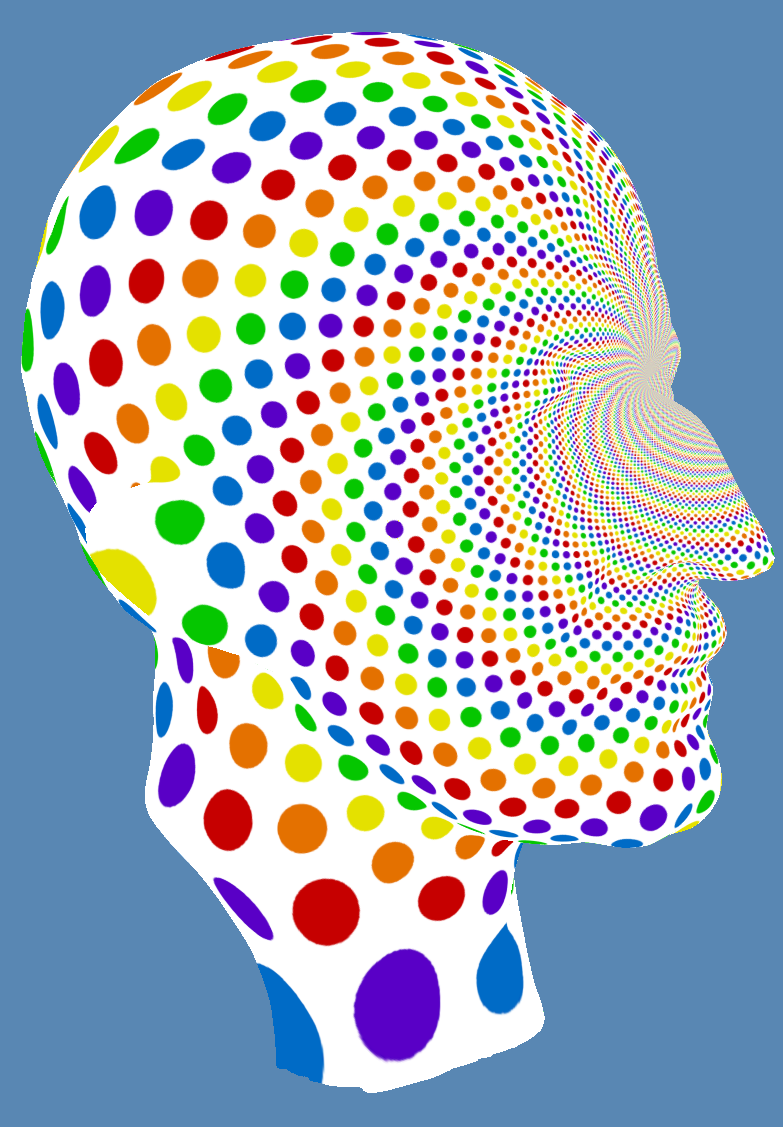}
 \includegraphics[width=0.51\textwidth]{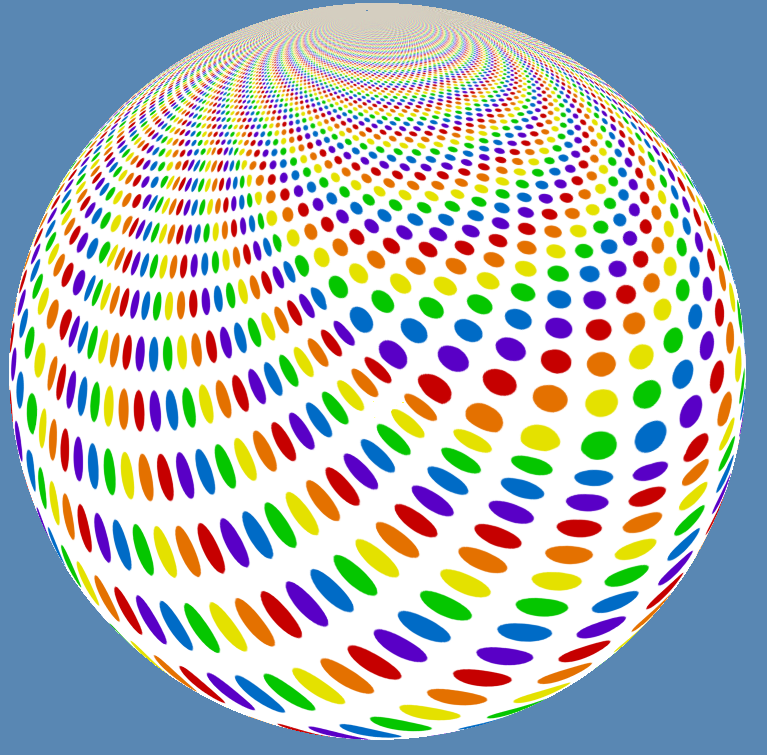}  \\ \vspace{8mm}
 \includegraphics[width=0.45\textwidth]{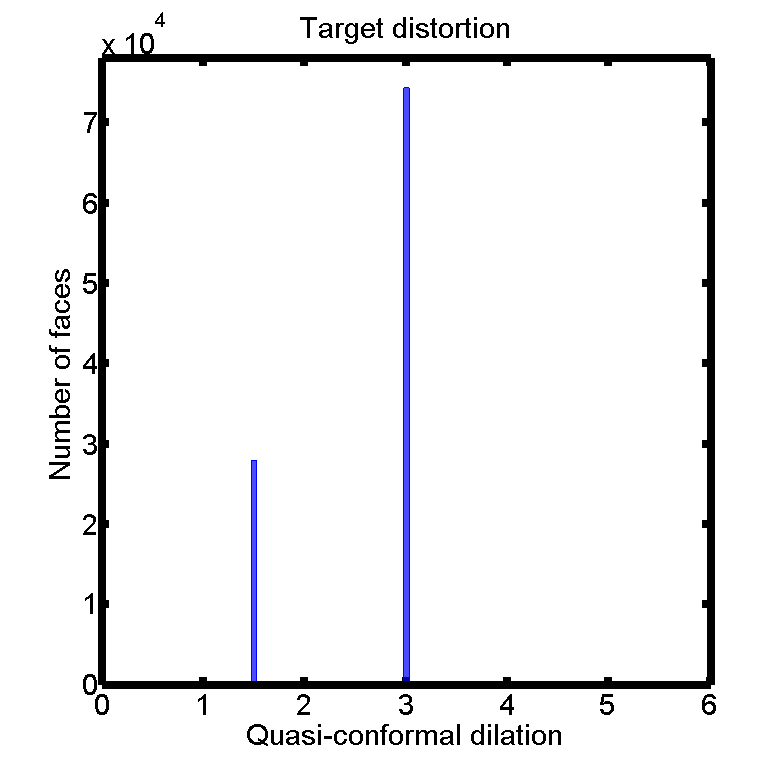}
 \includegraphics[width=0.45\textwidth]{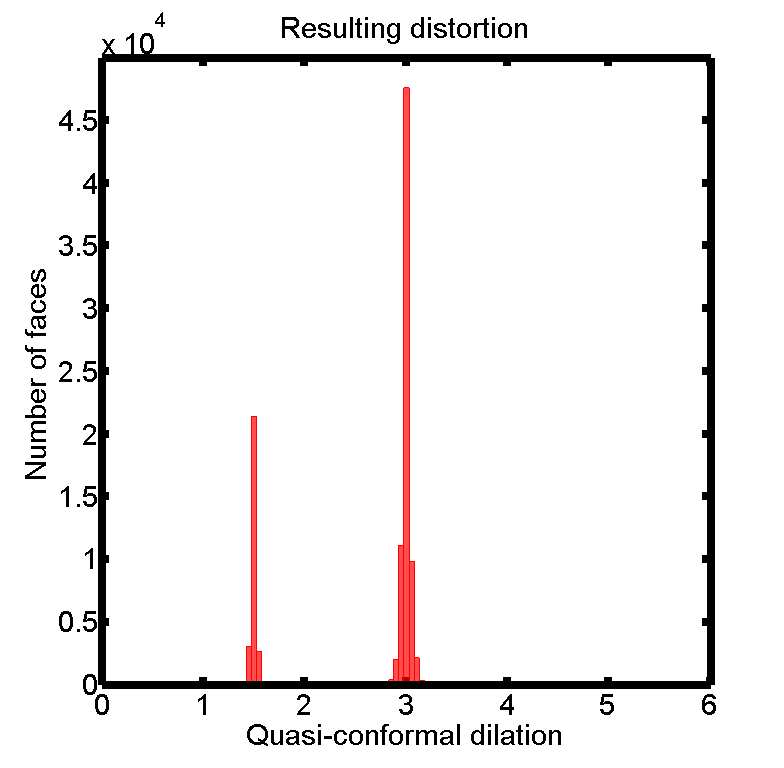}
 \caption{A Max Planck model with a circle pattern and the spherical quasiconformal parameterization obtained by our algorithm. Top left: the input surface. Top right: the spherical parameterization. Bottom left: The target quasiconformal distortion. Bottom right: The resulting quasiconformal distortion of the parameterization. The user-defined synthetic distortion is achieved on the spherical parameterization.}
 \label{fig:maxplanck}
\end{figure}

\begin{figure}[!t]
 \centering
 \includegraphics[width=0.45\textwidth]{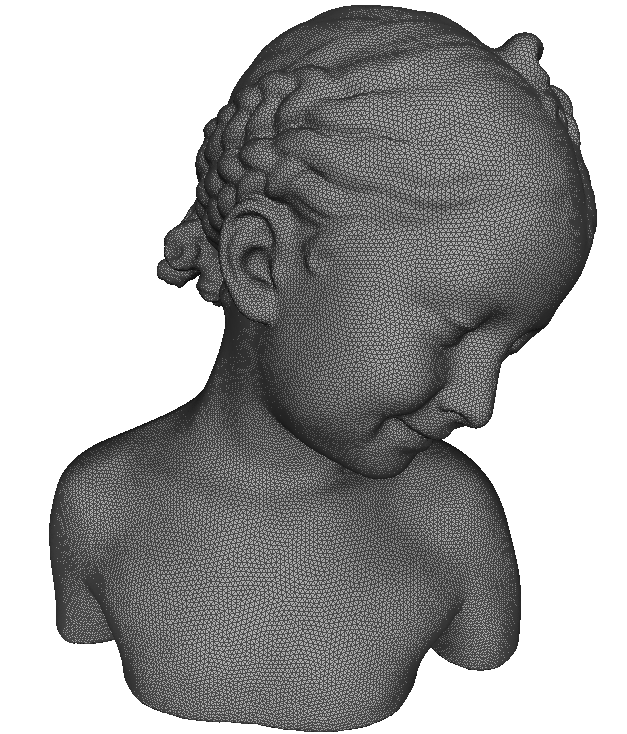}
 \includegraphics[width=0.45\textwidth]{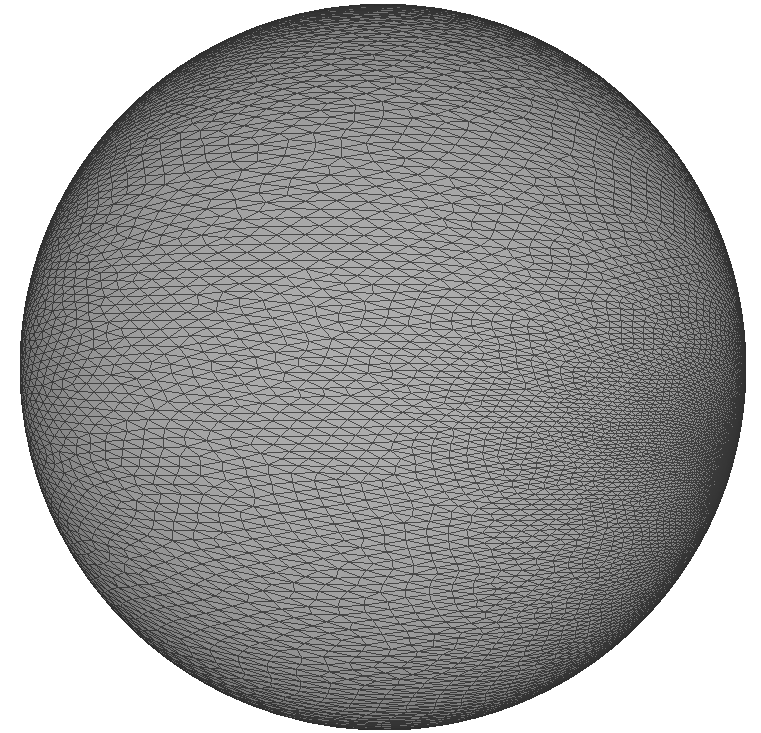}  \\ \vspace{8mm}
 \includegraphics[width=0.45\textwidth]{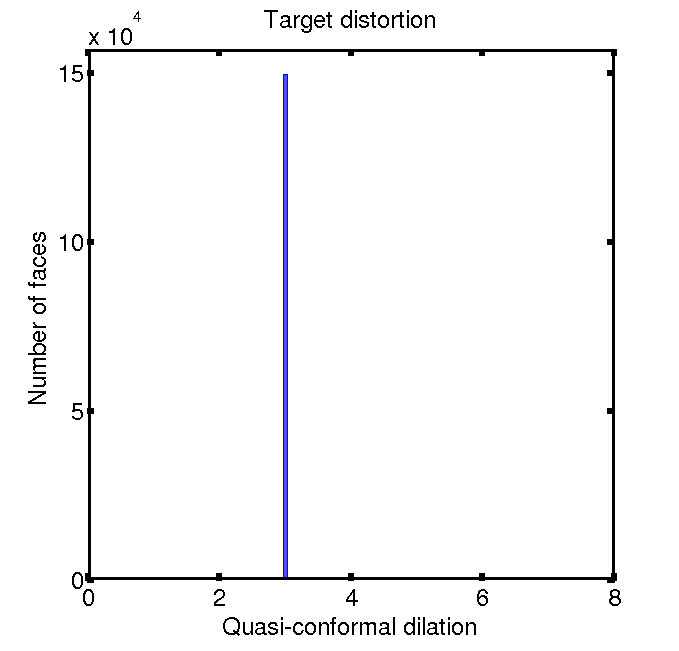}
 \includegraphics[width=0.45\textwidth]{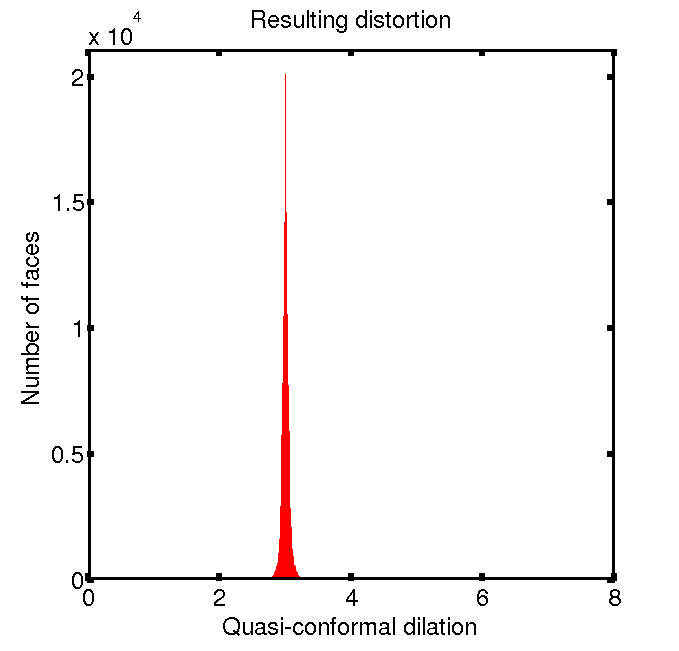}
 \caption{A bimba model and the spherical quasiconformal parameterization with uniform conformality distortion obtained by our proposed algorithm. Top left: the input surface. Top right: the spherical parameterization. Bottom left: The target quasiconformal distortion. Bottom right: The resulting quasiconformal distortion of the parameterization.}
 \label{fig:bimba}
\end{figure}

\begin{figure}[!t]
 \centering
 \includegraphics[width=0.45\textwidth]{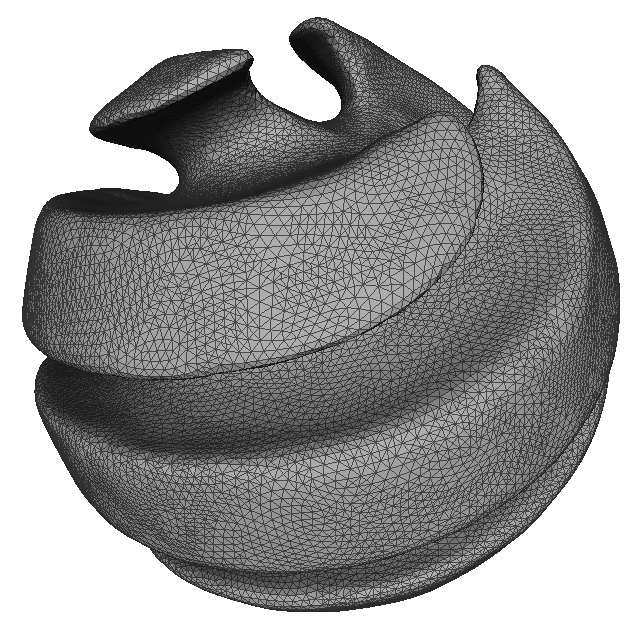}
 \includegraphics[width=0.45\textwidth]{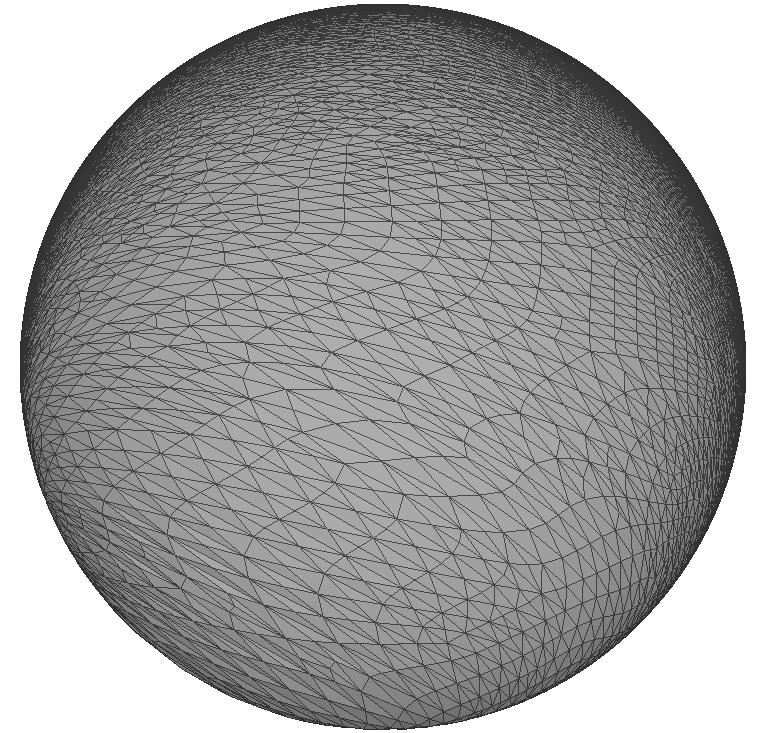} \\ \vspace{8mm}
 \includegraphics[width=0.45\textwidth]{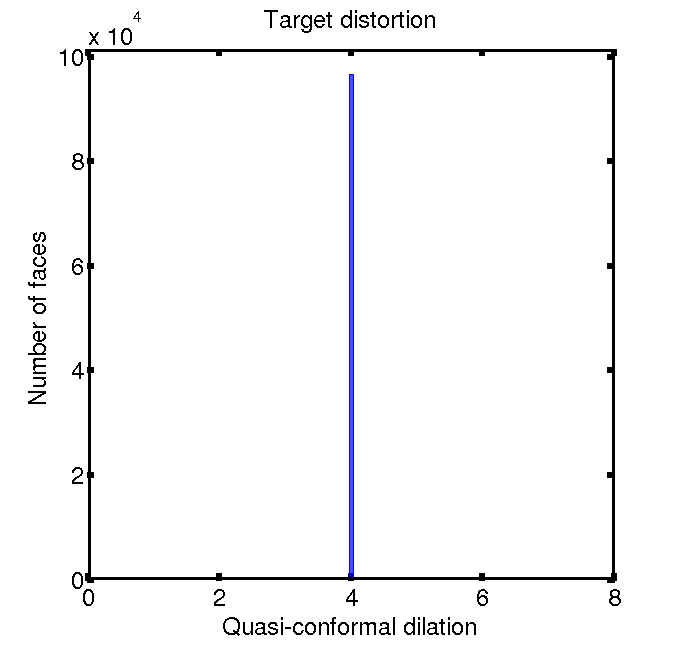}
 \includegraphics[width=0.45\textwidth]{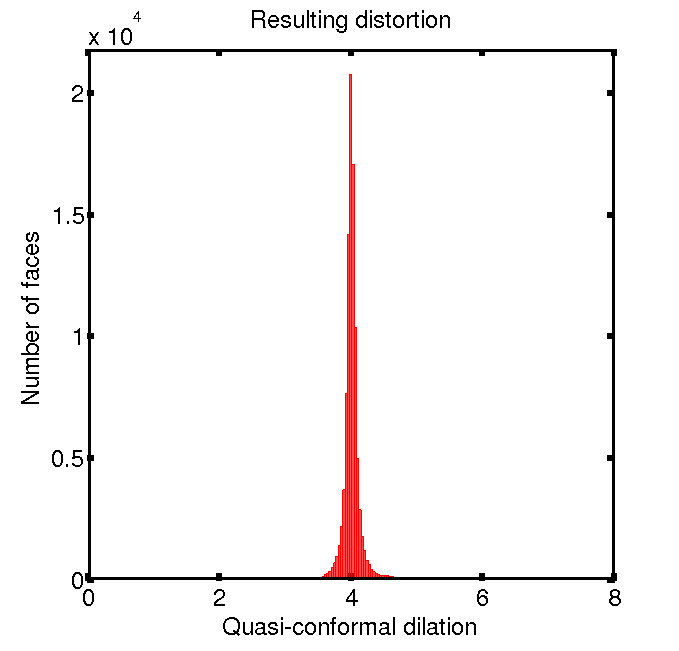}
 \caption{A spiral and the spherical quasiconformal parameterization with uniform conformality distortion obtained by our algorithm. Top left: the input surface. Top right: the spherical parameterization. Bottom left: The target quasiconformal distortion. Bottom right: The resulting quasiconformal distortion of the parameterization.}
 \label{fig:spiral}
\end{figure}

Figure \ref{fig:brain} shows a genus-0 closed brain mesh and the spherical quasiconformal parameterization obtained by our fast algorithm. It can be observed that the resulting quasiconformal distortion closely resembles the desired quasiconformal distortion. Another example is shown in Figure \ref{fig:maxplanck}. In this example, we consider a discontinuous dilation as the target quasiconformal distortion. Even with the discontinuity, the spherical quasiconformal parameterization obtained can satisfy the desired distortion. It can be observed that the circles on the input mesh are transformed to two types of ellipses on the spherical quasiconformal parameterization. Also, two sharp peaks can be observed in the histogram of the resulting dilation plot. 

Then, we apply our algorithm for computing spherical uniform conformality distortion parameterization of genus-0 closed meshes by setting the target dilation as a constant. Figure \ref{fig:bimba} and Figure \ref{fig:spiral} show two examples of the spherical uniform conformality distortion parameterizations obtained by our algorithm. It is noteworthy that even for the highly convoluted spiral model, the resulting dilations significantly concentrate at the desired constant. The uniform dilation can also be observed from the triangular faces on the spherical parameterizations. This implies that our algorithm can effectively produce the spherical parameterizations with uniform conformality distortion. 

Table \ref{table:sphericalqc} records the performance of our proposed fast spherical quasiconformal parameterization algorithm. Because of the sparse linear systems in our algorithm, the computations finish within a few seconds even for very dense meshes. Also, in all examples, the resulting quasiconformal distortion is highly close to the target distortion. This reflects the accuracy of our proposed algorithm. Besides, the absence of extreme values in the resulting dilation distribution implies that the Beltrami coefficient is with sup norm $\ll 1$. Hence, the resulting parameterizations are bijective.

\begin{table}[!t]
\footnotesize
    \begin{center}
    \begin{tabular}{ |l|c|c|C{15mm}|C{15mm}|C{15mm}|C{15mm}| }
    \hline
     Surfaces & \# of faces & Time (s) & \multicolumn{2}{ c| }{Target dilation} & \multicolumn{2}{ c| }{Resulting dilation} \\ \cline{4-7}
      &  &  & Mean & SD & Mean & SD \\ \hline
      Max Planck & 102212 & 1.8867 & 2.5887 & 0.6692 & 2.5896 & 0.6687 \\
      Brain 1 & 91124 & 1.9399 & 1.1496 & 0.2486 & 1.1643 & 0.2319 \\
      Brain 2 & 92210 & 2.0185 & 1.2149 & 0.3021 & 1.2246 & 0.3030 \\
      Lion & 100000 & 2.0651 & 1.2174 & 0.2180 & 1.2246 & 0.2228 \\
      Spiral & 96538 & 1.7577 & 4.0000 & 0.0000 & 4.0079 & 0.2552 \\
      Bimba & 149524 & 3.8332 & 3.0000 & 0.0000 & 3.0007 & 0.0704 \\ 
      Dolphin & 3784 & 0.0756 & 1.0876 & 0.3518 & 1.2766 & 0.3628 \\
      Human face & 43056 & 0.8036 & 1.0109 & 0.1202 & 1.0314 & 0.1190 \\
      \hline
    \end{tabular}
    \end{center}
    \bigbreak
    \caption{The performance of our fast spherical quasiconformal parameterization algorithm.}
    \label{table:sphericalqc}
\end{table}

\begin{figure}[!t]
 \centering
 \includegraphics[width=0.33\textwidth]{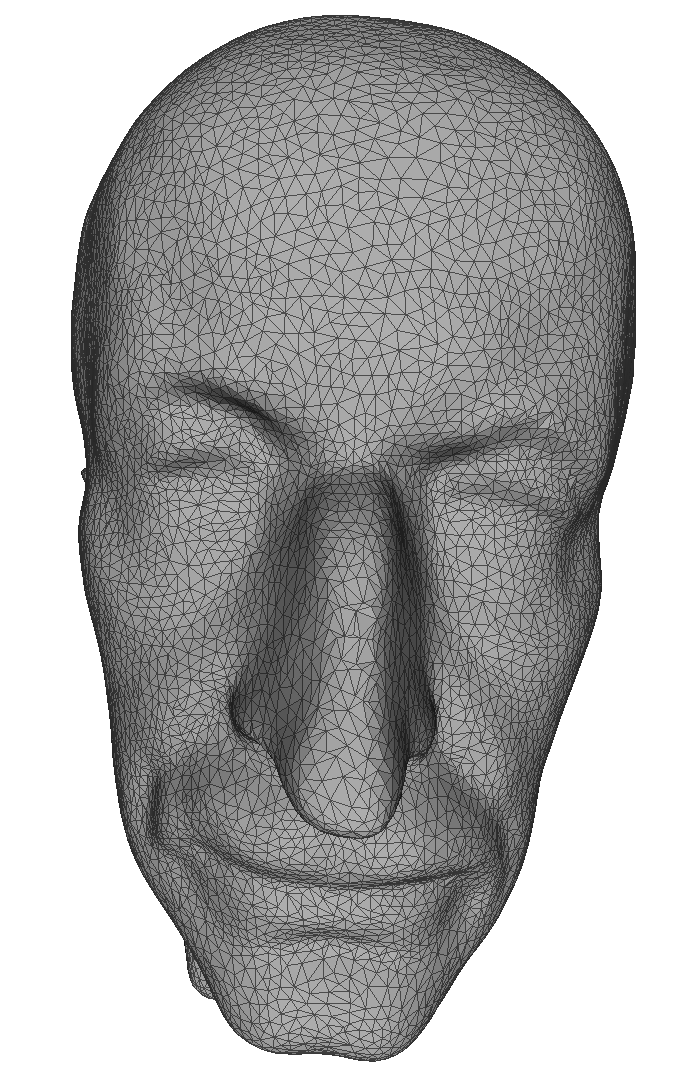} \ \ 
 \includegraphics[width=0.55\textwidth]{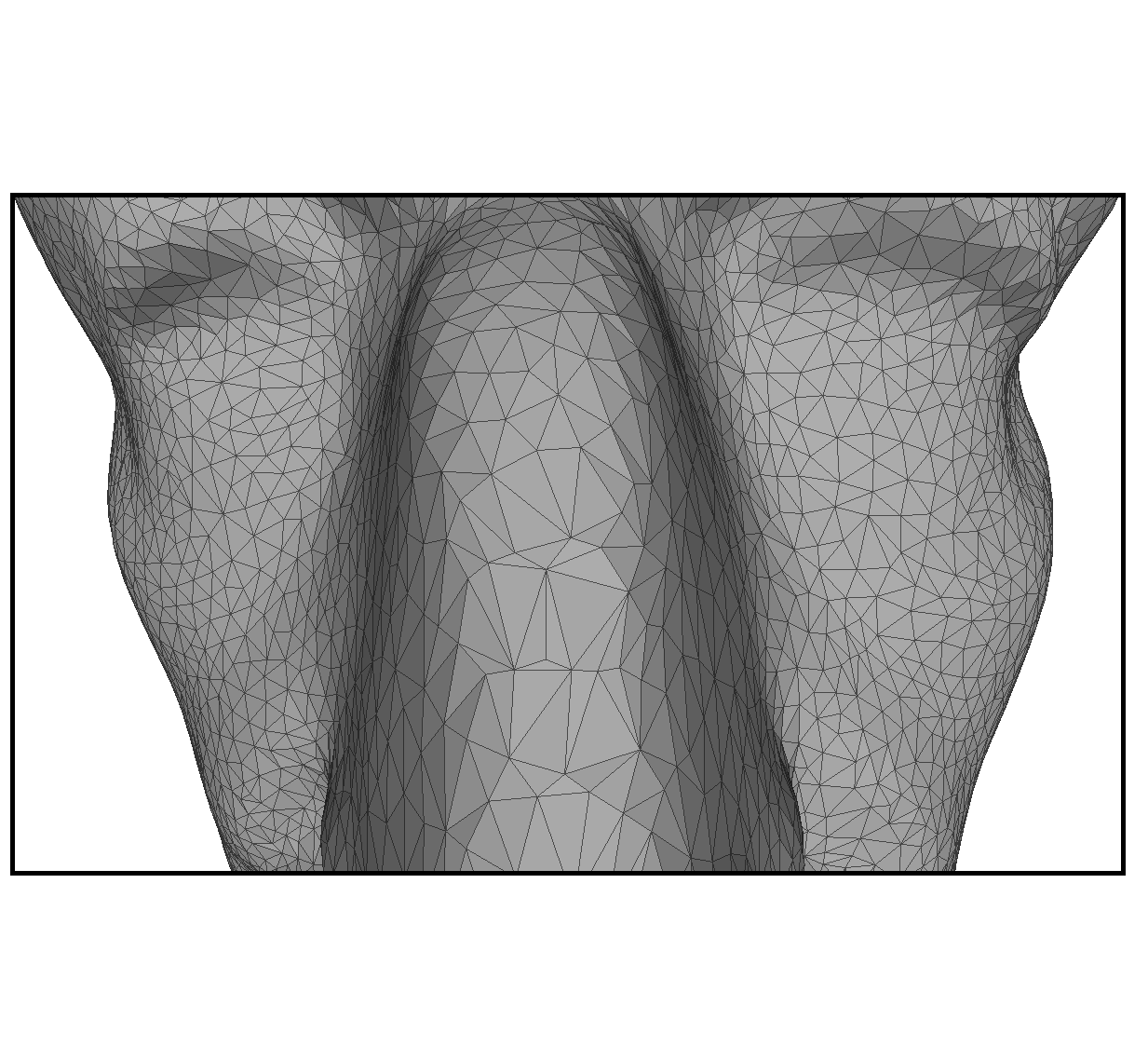}
 \caption{A human surface with a Delaunay triangulation. Note that the triangles at the nose bridge do not preserve the geometry well. Left: The whole surface. Right: A zoom-in of the nose bridge.}
 \label{fig:maxplanck_unremeshed}
\end{figure}

\begin{figure}[!t]
 \centering
 \includegraphics[width=0.33\textwidth]{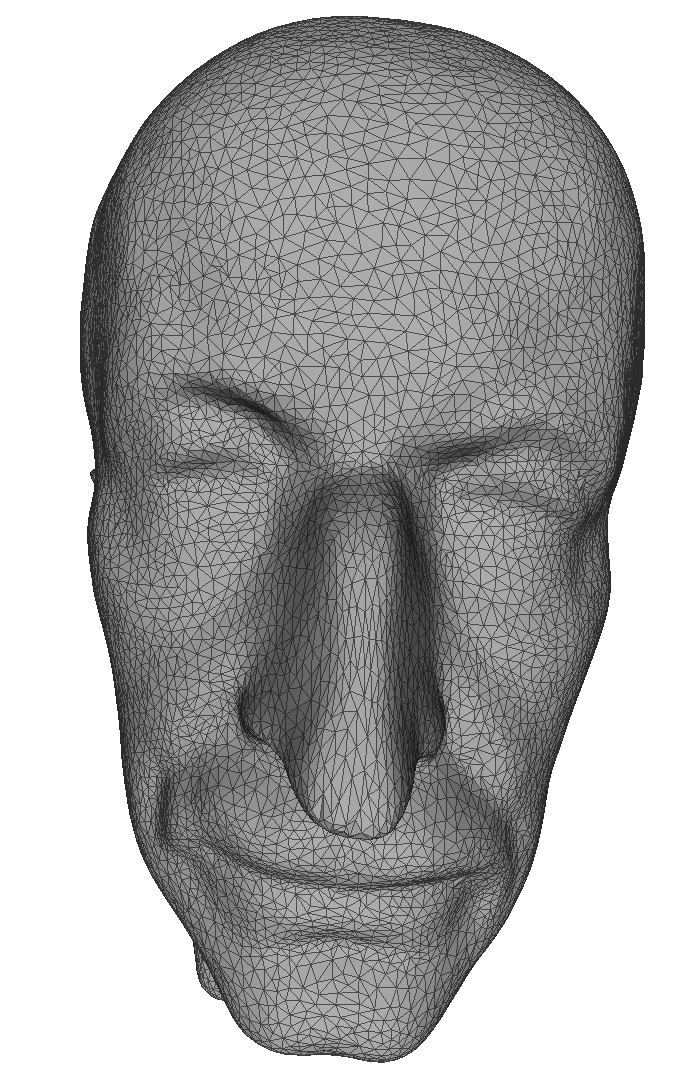} \ \ 
 \includegraphics[width=0.55\textwidth]{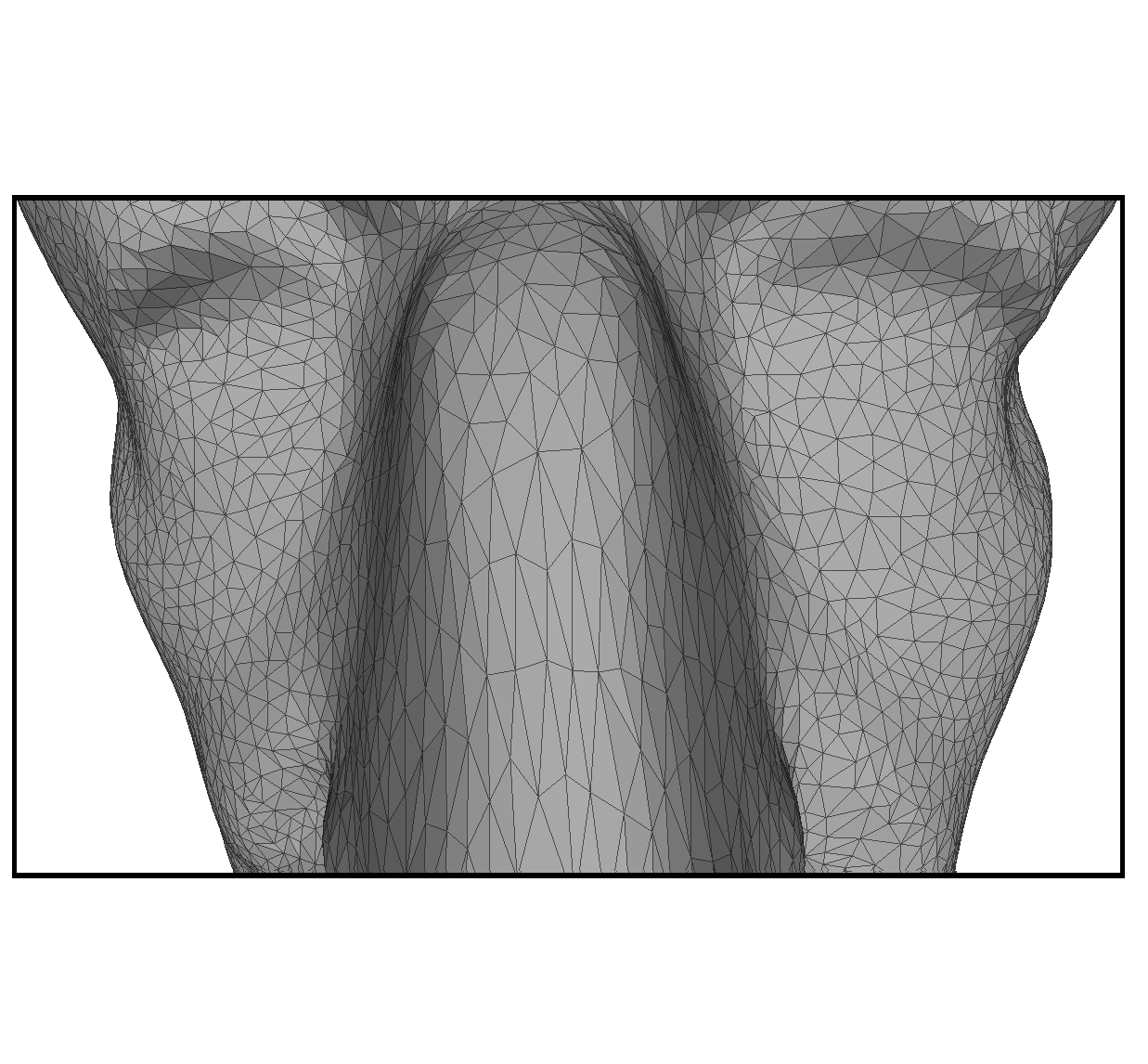}
 \caption{The remeshed human surface obtained by our proposed method. The sharp triangulations make nose bridge more prominent. Left: The whole surface. Right: A zoom-in of the nose bridge.}
 \label{fig:maxplanck_remeshed}
\end{figure}

After demonstrating the efficiency and accuracy of our proposed fast spherical quasiconformal parameterization algorithm, we apply the algorithm for adaptive remeshing. Figure \ref{fig:maxplanck_unremeshed} shows a human face represented by a Delaunay triangulation. Note that the triangular faces at the nose bridge do not follow the shape of the nose bridge and hence the nose bridge does not look prominent. We aim to remesh this particular part of the surface in order to enhance the visual quality. To achieve this, we set the quasiconformal dilation $K$ as
\begin{equation}
 K(T) = \left\{ \begin{array}{ll}
                  2.5 & \text{ if } T \text{ is at the nose bridge,}\\
                  1 & \text{ otherwise,}
                \end{array}\right.
\end{equation}
for all triangle elements $T$. Two points $p_1$, $p_2$ are manually selected at the top and the tip of the nose to control the direction of the distortion. Then, we apply our proposed FSQC algorithm with the quasiconformal dilation $K$ and obtain a spherical quasiconformal parameterization of the human surface. After that, we apply the spherical Delaunay triangulation algorithm to remesh the spherical parameterization. The final induced triangulation on the original surface is shown in Figure \ref{fig:maxplanck_remeshed}. Note that the triangulations at the nose bridge become sharp and naturally follow the geometry of the nose bridge. This improves the visualization of the human face.

\begin{figure}[!t]
 \centering 
 \includegraphics[width=0.42\textwidth]{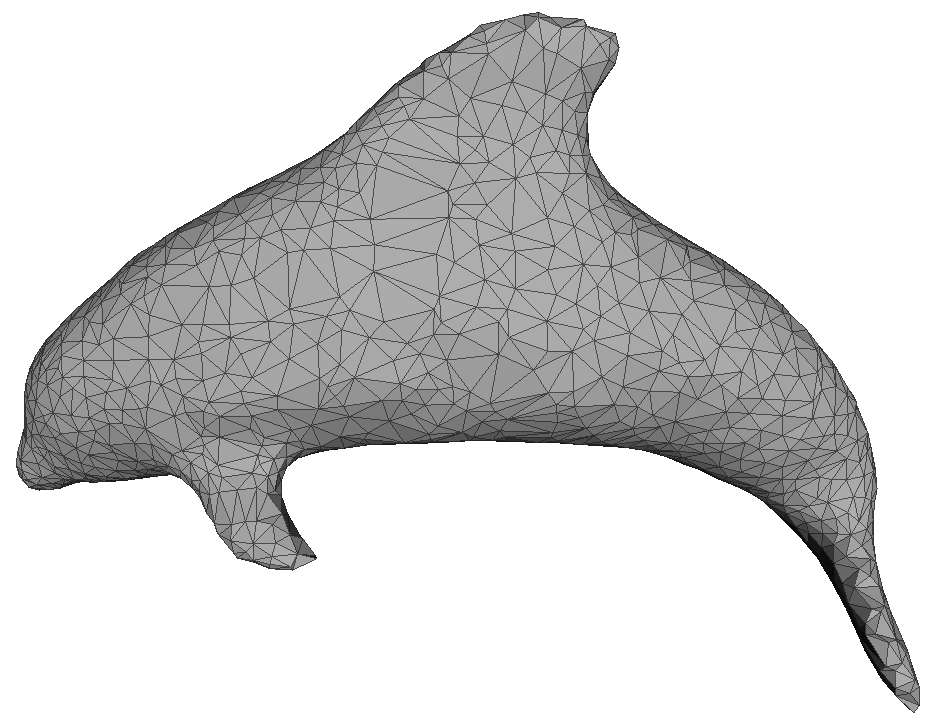}
 \includegraphics[width=0.52\textwidth]{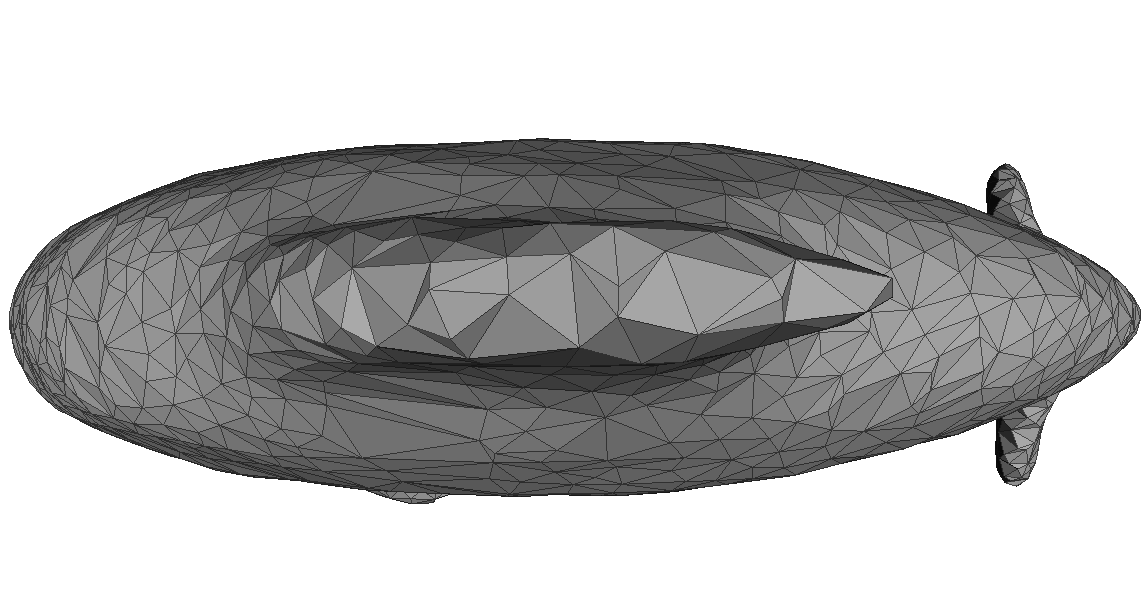} \\ \vspace{6mm}
 \includegraphics[width=0.42\textwidth]{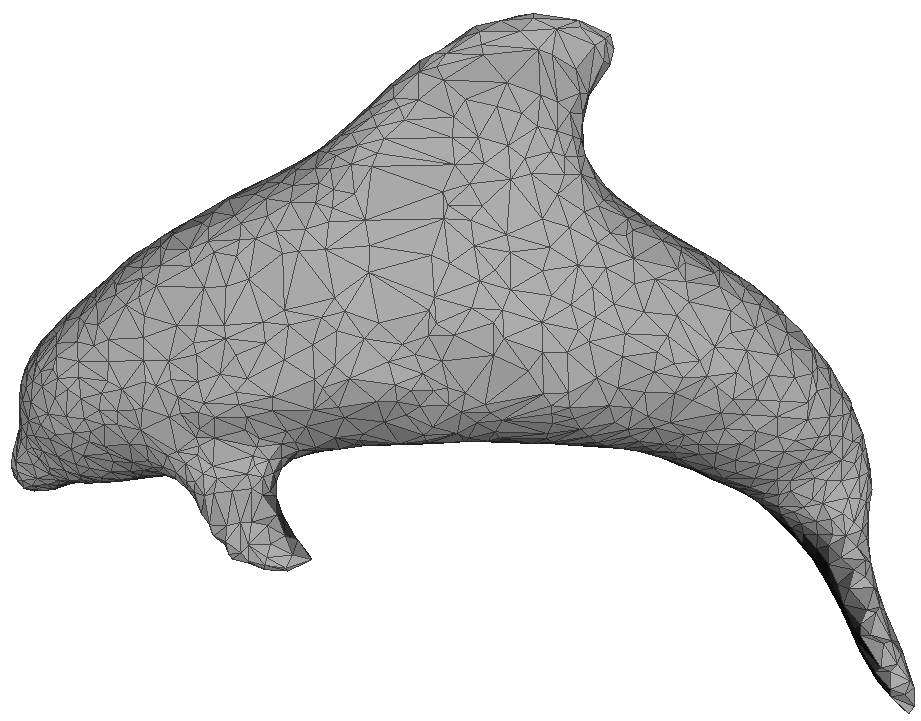}
 \includegraphics[width=0.52\textwidth]{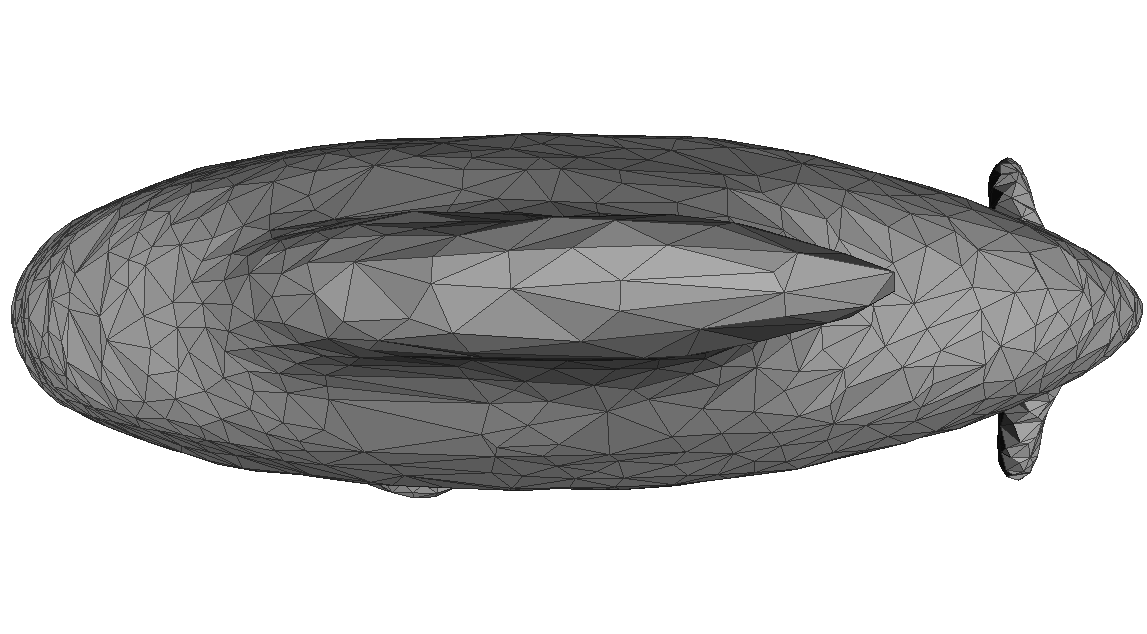}
 \caption{Remeshing the dorsal fin of a dolphin surface. Top: The side view and the top view of the dolphin with a Delaunay triangulation. Bottom: The side view and the top view of the remeshed dolphin obtained by our proposed method. It can be easily observed that the dorsal fin of the dolphin becomes more prominent after the remeshing procedure. }
 \label{fig:dolphin_remeshed}
\end{figure}

Another example of a dolphin surface is shown in Figure \ref{fig:dolphin_remeshed}. The initial triangulation of the dolphin surface is Delaunay. It can be easily observed that the triangles at the dorsal fin of the dolphin are too regular and do not follow the geometry of the dorsal fin. This makes the shape of the dorsal fin non-smooth. To improve the visualization, we define the quasiconformal dilation $K$ as
\begin{equation}
 K(T) = \left\{ \begin{array}{ll}
                  2.5 & \text{ if } T \text{ is at the dorsal fin,}\\
                  1 & \text{ otherwise,}
                \end{array}\right.
\end{equation}
for all triangle elements $T$. Two points $p_1$, $p_2$ are again manually selected at the two ends of the dorsal fin for controlling the direction of the distortion. Then, we apply our proposed remeshing framework with the quasiconformal dilation $K$. This results in a remeshed dolphin surface as shown in Figure \ref{fig:dolphin_remeshed}. It is noteworthy that even without any changes in the positions of the vertices, our adaptive remeshing result significantly enhances the visual quality of the dorsal fin of the dolphin surface.

\begin{figure}[!t]
 \centering 
 \includegraphics[width=0.47\textwidth]{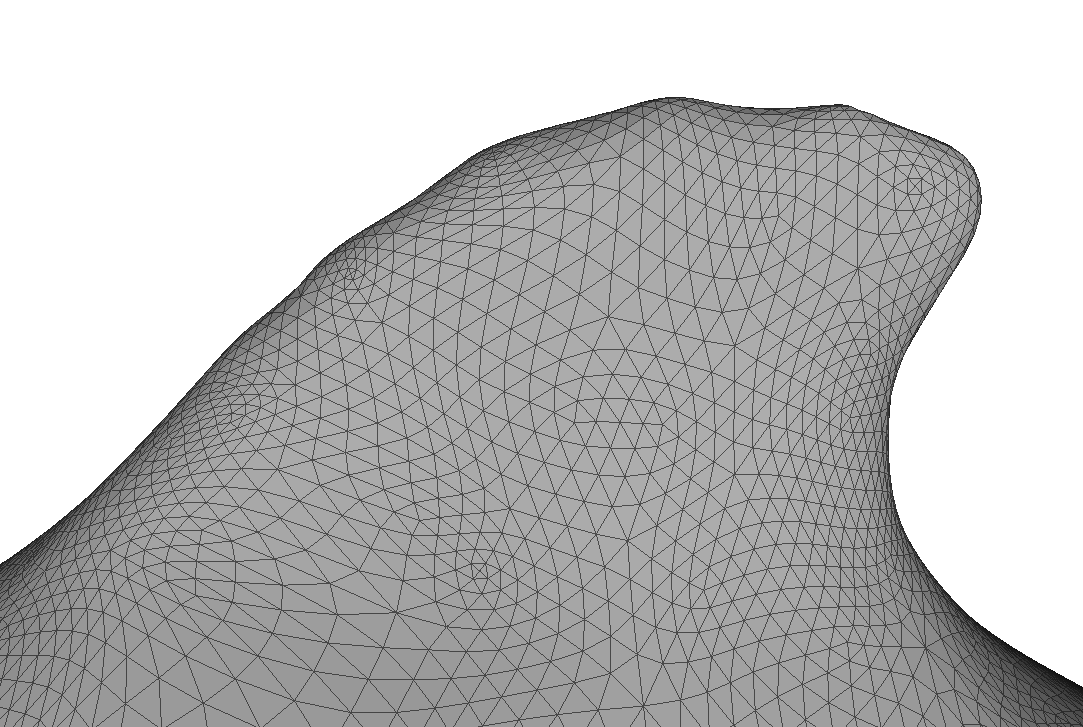}
 \includegraphics[width=0.47\textwidth]{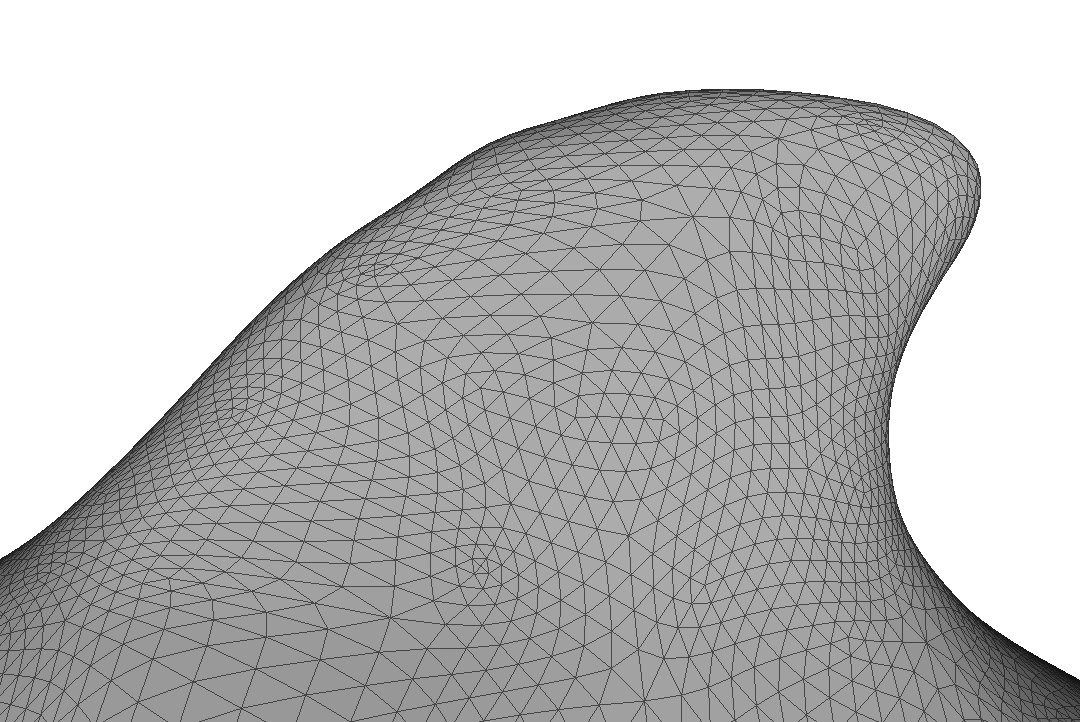}
 \caption{Surface subdivision at the dorsal fins of the two triangulations. Left: The result built upon the original Delaunay triangulation of the dolphin surface. Right: The result built upon the new triangulation obtained by our remeshing framework.}
 \label{fig:subdivision}
\end{figure}

To further highlight the advantage of our remeshing framework, we consider applying a surface subdivision algorithm on the original and the remeshed dolphin surfaces. The LS3 Subdivision Surface Algorithm \cite{Boye10} with Loop's weight \cite{Loop87} is applied. The algorithm is a built-in function in MeshLab. Figure \ref{fig:subdivision} shows the subdivision results. It can be observed that the original Delaunay triangulation of the dolphin surface does not result in a smooth dorsal fin while our triangulation does. This comparison reflects the importance of our adaptive remeshing framework.

\section{Conclusion} \label{conclusion}
In this work, we have developed a fast spherical quasiconformal parameterization algorithm, abbreviated as FSQC, for genus-0 closed surfaces. By appropriately defining the concept of quasiconformal dilation on each triangle element of a mesh, we have proposed a computational scheme for computing a spherical quasiconformal parameterization that satisfies the prescribed quasiconformal distortion. Experimental results have demonstrated the efficiency and accuracy of our algorithm. Furthermore, the FSQC algorithm can be applied for remeshing genus-0 closed surfaces to enhance their visual quality. The effectiveness of our proposed remeshing framework has been illustrated by two remeshing experiments. In the future, we aim to extend the quasiconformal parameterization algorithm for adaptively remeshing high-genus surfaces.


\begin{thebibliography}{}
\bibitem{Aigerman13}
N. Aigerman and Y. Lipman, {\em Injective and Bounded Distortion Mappings in 3D}. ACM Transactions on Graphics (Proceedings of SIGGRAPH), Volume 32, Issue 4, pp. 106:1--106:14, 2013.

\bibitem{Boye10}
S. Boy\'e, G. Guennebaud, and C. Schlick, {\em Least Squares Subdivision Surfaces}. Computer Graphics Forum, Volume 29, Issue 7, pp. 2021--2028, 2010.

\bibitem{Chen09}
X. Chen, A. Golovinskiy, and T. Funkhouser, {\em A Benchmark for 3D Mesh Segmentation}. ACM Transactions on Graphics (Proceedings of SIGGRAPH), Volume 28, Issue 3, pp. 73:1--73:12, 2009.

\bibitem{Choi14}
C. P. Choi, X. F. Gu and L. M. Lui, {\em Quasiconformal Surface Remeshing}. UCLA CAM report 14-62, 2014.

\bibitem{Choi15a}
P. T. Choi, K. C. Lam and L. M. Lui, {\em FLASH: Fast Landmark Aligned Spherical Harmonic Parameterization for Genus-0 Closed Brain Surfaces}. SIAM Journal on Imaging Sciences, Volume 8, Issue 1, pp. 67--94, 2015.

\bibitem{Choi15b}
P. T. Choi and L. M. Lui, {\em Fast Disk Conformal Parameterization of Simply-connected Open Surfaces}. Journal of Scientific Computing, Volume 65, Issue 3, pp. 1065--1090, 2015.

\bibitem{Choi15d}
G. P.-T. Choi, K. T. Ho, and L. M. Lui, {\em Spherical Conformal Parameterization of Genus-0 Point Clouds for Meshing}. SIAM Journal on Imaging Sciences, forthcoming 2016.

\bibitem{Choi15c}
G. P.-T. Choi and L. M. Lui, {\em A Linear Formulation for Disk Conformal Parameterization of Simply-connected Open Surfaces}. Preprint, arXiv:1508.00396v2, 2016.

\bibitem{Floater02}
M. Floater and K. Hormann, {\em Parameterization of Triangulations and Unorganized Points}. Tutorials on Multiresolution in Geometric Modelling, pp. 287--316, 2002.

\bibitem{Floater05}
M. Floater and K. Hormann, {\em Surface Parameterization: A Tutorial and Survey}. Advances in Multiresolution for Geometric Modelling, pp. 157--186, 2005.

\bibitem{Gardiner00}
F. Gardiner and N. Lakic. {\em Quasiconformal Teichm\"uller Theory}. American Mathematics Society, 2000.

\bibitem{Gu02}
X. Gu, S. J. Gortler and H. Hoppe, {\em Geometry Images}. ACM Transactions on Graphics (Proceedings of SIGGRAPH), Volume 21, Issue 3, pp. 355--361, 2002.

\bibitem{Hormann00}
K. Hormann and G. Greiner, {\em MIPS: An Efficient Global Parametrization Method}. Curves and Surfaces Proceedings (Saint Malo, France), pp. 152--163, 2000.

\bibitem{Hormann01}
K. Hormann, U. Labsik, and G. Greiner, {\em Remeshing Triangulated Surfaces with Optimal Parameterizations}. Computer-Aided Design, Volume 33, Issue 11, pp. 779--788, 2001.

\bibitem{Hormann07}
K. Hormann, B. L\'evy, and A. Sheffer, {\em Mesh Parameterization: Theory and Practice}. ACM SIGGRAPH 2007 Course Notes, pp. 1--122, 2007.

\bibitem{Hu11}
J. Hu, X. Liu and Q. Xie, {\em Subdivision Connectivity Remeshing and its Applications}. Computer Animation and Virtual Worlds, Volume 22, Issue 6, pp. 519--528, 2011.

\bibitem{Jin08}
M. Jin, J. Kim, F. Luo, and X. Gu, {\em Discrete Surface Ricci Flow}. IEEE Transaction on Visualization and Computer Graphics, Volume 14, Issue 5, pp. 1030--1043, 2008.

\bibitem{Jost11}
J. Jost, {\em Riemannian Geometry and Geometric Analysis}. Universitext, Springer, 2011.

\bibitem{Lai14}
R. Lai, Z. Wen, W. Yin, X. Gu, and L. M. Lui, {\em Folding-free Global Conformal Mapping for Genus-0 Surfaces by Harmonic Energy Minimization}. Journal of Scientific Computing, Volume 58, Issue 3, pp. 705--725, 2014.

\bibitem{Lipman12}
Y. Lipman, {\em Bounded Distortion Mapping Spaces for Triangular Meshes}. ACM Transactions on Graphics (Proceedings of SIGGRAPH), Volume 31, Issue 4,, pp. 108:1--108:13, 2012.

\bibitem{Loop87}
C. Loop, {\em Smooth Subdivision Surfaces Based on Triangles}. M.S. Mathematics thesis, University of Utah, 1987.

\bibitem{Lui13}
L. M. Lui, K. C. Lam, T. W. Wong, and X. Gu, {\em Texture Map and Video Compression using Beltrami Representation}. SIAM Journal on Imaging Sciences, Volume 6, Issue 4, pp. 1880--1902, 2013.

\bibitem{Lui14}
L. M. Lui, K. C. Lam, S. T. Yau, and X. Gu, {\em Teichm\"{u}ller Mapping (T-map) and its Applications to Landmark Matching}. SIAM Journal on Imaging Sciences, Volume 7, Issue 1, pp. 391--426, 2014.

\bibitem{Lui15}
L. M. Lui, X. Gu, and S.-T. Yau, {\em Convergence of an Iterative Algorithm for Teichm\"uller Maps via Harmonic Energy Optimization}. Mathematics of Computation, Volume 84, Number 296, pp. 2823--2842, 2015.

\bibitem{Meng15}
T. W. Meng, G. P.-T. Choi, and L. M. Lui, {\em TEMPO: Feature-Endowed Teichm\"uller Extremal Mappings of Point Clouds}. Preprint, arXiv:1511.06624, 2015.

\bibitem{Praun03}
E. Praun and H. Hoppe, {\em Spherical Parametrization and Remeshing}. ACM Transactions on Graphics, Volume 22, Issue 3, pp. 340--349, 2003.

\bibitem{Remacle10}
J.-F. Remacle, C. Geuzaine, G. Comp\`ere, and E. Marchandise, {\em High Quality Surface Remeshing Using Harmonic Maps}. International Journal for Numerical Methods in Engineering, Volume 83, Issue 4, pp. 403--425, 2010.

\bibitem{Schoen94}
R. Schoen and S.T. Yau, {\em Lectures on Differential Geometry}. International Press, Cambridge, MA, 1994.

\bibitem{Schoen97}
R. Schoen and S. Yau, {\em Lectures on Harmonic Maps}. International Press, Cambridge, MA, 1997.

\bibitem{Sheffer06}
A. Sheffer, E. Praun, and K. Rose, {\em Mesh Parameterization Methods and their Applications}, Foundations and Trends in Computer Graphics and Vision, Volume 2, Issue 2, pp. 105--171, 2006.

\bibitem{Weber12}
O. Weber, A. Myles, and D. Zorin, {\em Computing Extremal Quasiconformal Maps}. Computer Graphics Forum, Volume 31, Issue 5, pp. 1679--1689, 2012.

\bibitem{Yang09}
Y. L. Yang, R. Guo, F. Luo, S. M. Hu, and X. F. Gu, {\em Generalized Discrete Ricci Flow}, Computer Graphics Forum, Volume 28, Issue 7, pp. 2005--2014, 2009.

\bibitem{Zhang14}
M. Zhang, R. Guo, W. Zeng, F. Luo, S. T. Yau, and X. Gu, {\em The Unified Discrete Surface Ricci Flow}. Journal of Graphical Models, Volume 76, Issue 5, pp. 321--339, 2014.

\bibitem{Zeng12}
W. Zeng, L. M. Lui, F. Luo, T. F. Chan, S. T. Yau, and X. Gu, {\em Computing Quasiconformal Maps Using an Auxiliary Metric and Discrete Curvature Flow}. Numerische Mathematik, Volume 121, Issue 4, pp. 671--703, 2012.

\bibitem{aim@shape}
{\em AIM@SHAPE Shape Repository}. http://visionair.ge.imati.cnr.it/

\bibitem{delaunay}
{\em SPHERE\_DELAUNAY}. http://people.sc.fsu.edu/$\sim$jburkardt/m\_src/sphere\_delaunay/sphere\_delaunay.html
\end{thebibliography}
\end{document}